\newcommand{\Hb}{\mathbf{H}}
\newcommand{\E}{{\mathbb E}}
\newcommand{\rank}{{\rm rank}}
\newcommand{\bmat}{\begin{bmatrix}}
\newcommand{\emat}{\end{bmatrix}}
\newcommand{\mR}{{\mathbb R}}
\newtheorem{lemma}{\bf Lemma}
\newtheorem{theorem}{\bf Theorem}
\newtheorem{corollary}{\bf Corollary}
\newtheorem{remark}{\bf Remark}
\newtheorem{computation}{\bf Procedure}
\def\BibTeX{{\rm B\kern-.05em{\sc i\kern-.025em b}\kern-.08em
    T\kern-.1667em\lower.7ex\hbox{E}\kern-.125emX}}
\title{Modeling of Low Rank Time Series}
\author{Wenqi Cao, \IEEEmembership{Student Member, IEEE}, Anders Lindquist, \IEEEmembership{Life Fellow, IEEE}\\ and Giorgio Picci, \IEEEmembership{Life Fellow, IEEE}
\thanks{Wenqi Cao is with Department of Automation, Shanghai Jiao Tong University, Shanghai, China. {\tt\small wenqicao@sjtu.edu.cn}}%
\thanks{Anders Lindquist is with the Department of Automation and the School of Mathematical Sciences, Shanghai
Jiao Tong University, Shanghai, China. {\tt\small alq@math.kth.se}}%
\thanks{Giorgio Picci is with the Department of Information Engineering, University of Padova, Italy. {\tt\small picci@dei.unipd.it}}%
}
\begin{document}

\maketitle

\begin{abstract}
 Rank-deficient stationary stochastic vector processes are present in many problems in network theory and dynamic factor analysis. In this paper we study hidden dynamical relations between the components of a discrete-time stochastic vector process and investigate their properties with respect to stability and causality. More specifically, we construct  transfer functions with a full-rank input process formed from selected components of the given vector process and having a vector process of the remaining components as output. An important question, which we answer in the negative, is whether it is always possible to find such a deterministic relation that is stable.  If it is unstable, there must be feedback from output to input ensuring that stationarity is maintained. This leads to connections to robust control.  We also show how our results could be used to investigate the structure of  dynamic network models and the latent low-rank stochastic process in a dynamic factor model.
\end{abstract}

\begin{IEEEkeywords}
 modeling, rank-deficient processes, dynamic factor models, stability, Granger causality, robust control
%Enter key words or phrases in alphabetical
%order, separated by commas. For a list of suggested keywords, send a blank
%e-mail to keywords@ieee.org or visit \underline
%{http://www.ieee.org/organizations/pubs/ani\_prod/keywrd98.txt}
\end{IEEEkeywords}

\section{Introduction}
\label{sec:introduction}

The basic topic of this paper is  modeling of discrete-time stochastic vector processes with rank-deficient spectral densities, i.e., reduced-rank (or low rank) processes. In  some literature such processes are also called sparse (or singular) signals. Processes of this kind are encountered in many practical applications where the data set is large with many correlated variables.
 In such applications available results for full-rank cases are no longer applicable, and therefore, as the theory of full-rank processes has become mature, many researchers have turned to rank-deficient processes in recent years, often in the form of different system realizations; see \cite{sglSS19} for applications in navigation and \cite{sglcov20,Ferrante-18letters}.

Rank-deficient processes may appear in networks, in which the nodes correspond to the components of the vector process. \cite{BCV18,WVD18,BGHP17,MS12,JPC19}.
Such dynamic network models are often needed in areas like econometrics, biology and engineering\cite{nweco10,nwbio09, nweng91}.
Rank-deficiency processes also play an important role in dynamic factor models \cite{Deistler19,AD12,EJC,ARYuleWalker} and generalized factor analysis (GFA) models \cite{BP15,Pms}, where there are latent processes with singular spectral densities. % so that the observations can be denoised.
Research on modeling and estimation of  models \cite{Deistler19,EJC} with singular latent processes often aim at representing the latent process as a product of a minimal common factor and some gains, called factor loadings \cite{BP15}. In econometrics singular AR and ARMA systems are important in the context of latent processes  modeling \cite{EJC,Deistler19,Ferrante-20cdc}, systems stability \cite{ARYuleWalker,AD12} and DSGE (dynamic stochastic general equilibrium) models.
Singular processes are also studied for state space models to extract the dynamical relations between the correlated entries \cite{GLsampling,CLPcdc20,PCLsysid21,BarnettSeth15} which relates to Granger causality \cite{Granger63,Granger69}.

Granger causality frequently appears in the context of feedback and was originally proposed for economics, then applied in control and information \cite{BarnettSeth15,Caines,PD12,ADD19} and neurophysiological systems \cite{BS17}, revealing the causal relations between the entries.
In addition, identification also needs to be considered when modeling the low rank processes, so that the approaches guarantee identifiability, rapidity, as well as the recovering of the real dynamical relations \cite{CLPcdc20,VARid}. Identification of low rank processes is considered in \cite{PCLsysid21,CPLauto21}.

Let $\{\zeta(t), t\in\mathbb{Z}\}$ be a stationary $p$-dimensional discrete-time stochastic vector process  with a rational spectral density $\Phi(e^{i\theta})$ of rank $m<p$. Then by rearranging the components of $\zeta$ if necessary, there is a decomposition
\begin{equation}
\label{eq:zeta2uy}
\zeta =\begin{bmatrix}u\\y\end{bmatrix}
\end{equation}
with $u$ an m-dimensional process with a spectral density $\Phi_u(e^{i\theta})$ of full rank $m$. As $u$ and $y$ are jointly stationary, there is a natural decomposition of $\Phi(z)$, namely
\begin{equation}
\label{Phidecomp}
\Phi(z)=\begin{bmatrix}
\Phi_u(z)&\Phi_{uy}(z)\\ \Phi_{yu}(z)&\Phi_y(z)
\end{bmatrix}
\end{equation}
where $\Phi_u(z)$ is full rank a.e.
An important question to be studied in this paper is whether there is a deterministic dynamic relation between $u$ and the $p-m$-dimensional vector process $y$. More precisely, with $z$ the time shift operator such that $zu(t)=u(t+1)$,  is there a description of the  relation between the input $u$ and  the output $y$ by  a   rational transfer function $F(z)$   as in Figure~\ref{system},  and, if so, how is it determined?
\begin{figure}[h]%[thpb]
      \centering
      \includegraphics[scale=0.12]{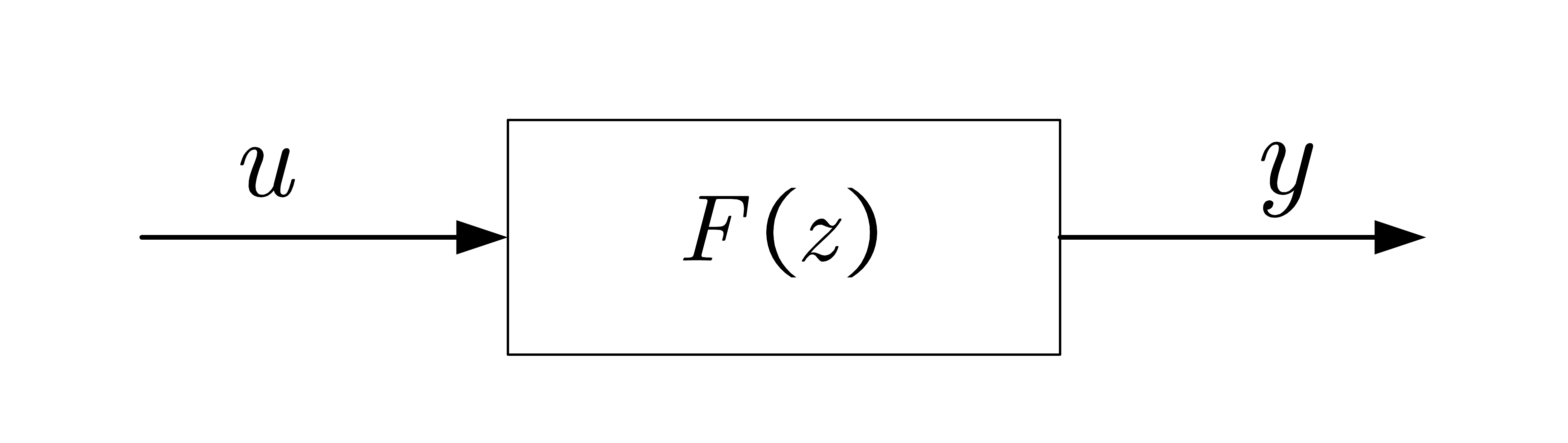}
      \caption{Dynamical relation from $u$ to $y$}
      \label{system}
\end{figure}
Such deterministic relations exist and were studied in \cite{GLsampling} in the context of sampling of continuous-time models, which leads to increase of rank.
In dynamic factor analysis $u$ can play the role of a minimal static factor \cite{Deistler19}. An important application where this problem occurs is a network where the components of $\zeta$ correspond to the nodes and one needs to find the transfer function between specified groups of nodes.

Although both $u$ and $y$ are stationary processes, $F(z)$ in Figure~\ref{system} need not be stable, as often erroneously assumed. In fact, $u$ is in general affected by $y$ and a key result of this paper, see Theorem \ref{MainThm}, is that the two signals are related by a feedback model of the following structure
\begin{subequations}
  \begin{align*}
    y &= F(z)u,\\
    u & = H(z)y + r ,
 \end{align*}
\end{subequations}
where $r$ is an external stationary input. We shall  prove this representation  in the context of general feedback models, in Section~\ref{sec:feedback} below. In such  a representation, $F(z)$ turns out to be  unique and   uniquely determined by the spectral density decomposition \eqref{Phidecomp} as per  formula \eqref{Fred} to be derived later on.

In general the components of $\zeta$ can be rearranged in several different ways, thus producing different $u$ and $y$ in \eqref{eq:zeta2uy} while upholding the requirement that $u$ has full rank $m$.  An important question is whether there is always a choice for which $F(z)$ is a stability matrix. In this paper we shall demonstrate by way of counterexample that this is not always possible.

 In Section~\ref{sec:Fconstruction} we shall also provide a state-space  algorithm for determining  it. However $H(z)$ is not unique, but, since $\zeta(t)$ is stationary, it has to be chosen so that the feedback loop is internally stable \cite{DFT}. This leads to an interesting connection to  robust control, which  is the topic of Section~\ref{sec:H} .

The outline of the paper is as follows.  Section~\ref{sec:feedback} will be devoted to feedback models, inserting the deterministic dynamic system in Figure~\ref{system} in a stochastic feedback environment.  In Section~\ref{sec:H} we consider the design of the feedback environment needed for internal stability, thus connecting to robust control \cite{DFT}. In Section~\ref{sec:Fconstruction} we shall present a realization for $F(z)$ of degree at most $n$, and investigate its properties. Section~\ref{sec:stabilitycausality} deals with stability and causality.  In Section~\ref{sec:networks}  we explore the connections to dynamic network models, which is an important application. In Section~\ref{sec:factormodels} we show how our results could be used to investigate the structure of the latent low-rank stochastic process in a dynamic factor model. Section~\ref{sec:examples}  provides some further examples to illustrate our results. Finally, in Section~\ref{sec:conclusion} we give some conclusions.

\section{Feedback representations}\label{sec:feedback}

The two processes $u$ and $y$ in \eqref{eq:zeta2uy} are jointly stationary, and we can express both $y(t)$ and $u(t)$ as a sum of the best linear estimate based on the past of the other process plus an error term, i.e.,
 \begin{subequations}\label{yu}
           \begin{align}
   y(t) &= \mathbb{E}\{y(t)\mid \mathbf{H}_t^-(u)\} + v(t),\label{u2y}\\
    u(t) &= \mathbb{E}\{u(t)\mid \mathbf{H}^-_{t}(y)\} + r(t)\label{y2u}
\end{align}
\end{subequations}
where $\mathbf{H}_t^-(u)$ is the closed span of the past components $\{ u_1(\tau), u_2(\tau), \dots, u_q(\tau)\}\mid \tau \leq t\}$ of the vector process $u$ in the Hilbert space of random variables, and let $\mathbf{H}_t^-(y)$ be defined likewise in terms of $\{ y_1(\tau), y_2(\tau), \dots, y_p(\tau)\mid \tau \leq t\}$. For future use, we shall also need the closed span $\Hb_t^+(u)$ of the future components $\{ u_1(\tau), u_2(\tau), \dots, u_q(\tau)\}\mid \tau \geq t\}$  and the closed span $\Hb(u)$ of the complete (past and future) history of $u$, and similarly for $y$.
Each linear projection in \eqref{yu} can be represented by a linear filter which we write as
\begin{subequations}\label{FuHy}
           \begin{align}
    y &= F(z)u + v,\\
    u & = H(z)y + r ,
 \end{align}
  \end{subequations}
 where $F(z)$ and $H(z)$ are proper rational transfer functions of dimensions $(p-m)\times m$ and $m\times (p-m)$, respectively. Hence we have the feedback configuration depicted in Figure~\ref{fbdtfig}.
\begin{figure}[h]%[thpb]
      \centering
      \includegraphics[scale=0.45]{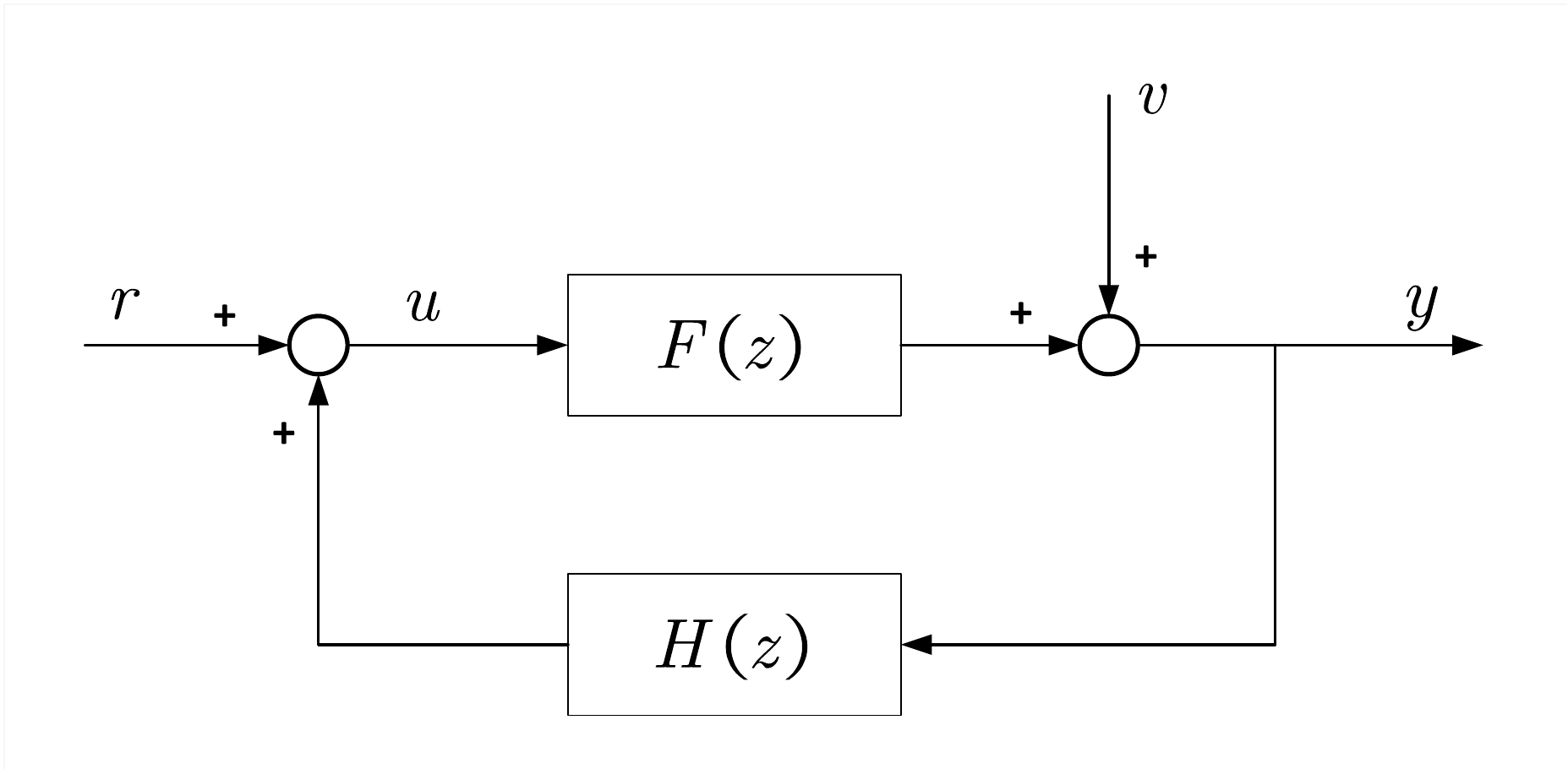}
      \caption{Block diagram illustrating  a feedback representation}
      \label{fbdtfig}
\end{figure}

The transfer functions  $F(z)$ and $H(z)$ are in general not stable, but, since  all processes are jointly stationary, the overall feedback configuration needs to be internally stable \cite{DFT,KZhou}. The error processes $v$ and  $r$ produced by \eqref{yu} are in general correlated. However,  it   will be shortly recalled in Appendix \ref{Uncorrvr} below, that there are always   feedback models \eqref{FuHy} where  $v$ and  $r$ are uncorrelated, and in the sequel we shall assume that this is so. In this context, we shall revisit some results in \cite{Caines}, where however a full-rank requirement to insure uniqueness was imposed.  After that,   we shall  allow for rank-deficient spectral densities.

\medskip

 \begin{lemma}\label{lem1}
        The transfer function matrix $T(z)$ from $\begin{bmatrix}r\\v\end{bmatrix}$ to $\begin{bmatrix}u\\y\end{bmatrix}$ in the feedback model \eqref{FuHy} is given by
\begin{subequations}\label{TPQ}
\begin{equation}
\label{T}
T(z)=\begin{bmatrix}P(z)&P(z)H(z)\\Q(z)F(z)&Q(z)\end{bmatrix},
\end{equation}
where
\begin{equation}
  \label{PQ}
\begin{split}
P(z)&= (I - H(z)F(z))^{-1},\\
Q(z)&=(I - F(z)H(z))^{-1}
\end{split}
\end{equation}
\end{subequations}
are strictly stable.
 Moreover,
 \begin{equation}
\label{PQFH}
Q(z)F(z)=F(z)P(z), \quad H(z)Q(z)=P(z)H(z),
\end{equation}
 and $T(z)$  is full rank and strictly stable.
 \end{lemma}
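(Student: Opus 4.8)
The plan is to dispatch the three claims in increasing order of depth: the closed form of $T(z)$ and the identities \eqref{PQFH} are pure linear algebra on proper rational matrices, the full-rank claim is an immediate consequence, and the strict stability is the real content.

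First I would derive \eqref{T} by eliminating one signal at a time from \eqref{FuHy}. Substituting $u=H(z)y+r$ into $y=F(z)u+v$ gives $(I-F(z)H(z))\,y=F(z)\,r+v$, hence $y=Q(z)F(z)\,r+Q(z)\,v$; symmetrically, substituting $y=F(z)u+v$ into $u=H(z)y+r$ gives $(I-H(z)F(z))\,u=r+H(z)\,v$, hence $u=P(z)\,r+P(z)H(z)\,v$. Stacking these two relations is exactly \eqref{T}, and the two inverses in \eqref{PQ} exist as rational matrices because $\det(I-F(z)H(z))=\det(I-H(z)F(z))$ (Sylvester's determinant identity) is not the zero function. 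The identities \eqref{PQFH} are the standard push-through relations: from $F(z)(I-H(z)F(z))=F(z)-F(z)H(z)F(z)=(I-F(z)H(z))F(z)$, left-multiplying by $Q(z)$ and right-multiplying by $P(z)$ gives $F(z)P(z)=Q(z)F(z)$, and $H(z)Q(z)=P(z)H(z)$ follows identically from $(I-H(z)F(z))H(z)=H(z)(I-F(z)H(z))$.

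For the full-rank claim I would simply read \eqref{FuHy} backwards: $r=u-H(z)y$ and $v=y-F(z)u$, so the transfer matrix from $\begin{bmatrix}u\\y\end{bmatrix}$ to $\begin{bmatrix}r\\v\end{bmatrix}$ is the proper rational matrix $\begin{bmatrix}I&-H(z)\\-F(z)&I\end{bmatrix}$. Composing the two maps, $T(z)\begin{bmatrix}I&-H(z)\\-F(z)&I\end{bmatrix}=I$ (which one also checks directly from \eqref{PQFH}), so $T(z)$ is invertible a.e.; equivalently $\det T(z)=1/\det(I-F(z)H(z))\not\equiv 0$.

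The main obstacle is the strict stability of $P(z)$ and $Q(z)$ — and hence of the two off-diagonal blocks $P(z)H(z)=H(z)Q(z)$ and $Q(z)F(z)=F(z)P(z)$, i.e.\ of all of $T(z)$ — given that $F(z)$ and $H(z)$ are themselves in general unstable; this is precisely the internal stability of the interconnection in Figure~\ref{fbdtfig}, which must be forced by the joint stationarity of all the signals. I would argue it in two steps. (i) Since $r(t)$ and $v(t)$ are past-projection errors built from $u,y$, one has $\Hb_t^-(r,v)\subseteq\Hb_t^-(u,y)$; conversely, because $F(z)$ and $H(z)$ are causal, $\begin{bmatrix}u(t)\\y(t)\end{bmatrix}$ can be recovered from $\Hb_t^-(r,v)$ provided the instantaneous block $I-F(\infty)H(\infty)$ is invertible (well-posedness of the loop), so $\Hb_t^-(u,y)=\Hb_t^-(r,v)$ for every $t$; hence the filter $T(z)$ is causal and proper. (ii) A causal, proper rational filter that carries the stationary process $\begin{bmatrix}r\\v\end{bmatrix}$ to the stationary process $\begin{bmatrix}u\\y\end{bmatrix}$ cannot have a pole in $|z|\ge 1$ (a pole there would make the filtered output have infinite variance), so it is strictly stable — this is the ``internal stability from stationarity'' argument revisited from \cite{Caines}, cf.\ \cite{DFT,KZhou}. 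I expect the delicate points to be the well-posedness $\det(I-F(\infty)H(\infty))\neq 0$ and the equality $\Hb_t^-(u,y)=\Hb_t^-(r,v)$ when the spectral density is rank-deficient; the uncorrelatedness of $v$ and $r$ recalled in Appendix~\ref{Uncorrvr} should be the ingredient that secures both.
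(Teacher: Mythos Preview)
Your proposal is correct and follows essentially the same route as the paper: both write the inverse map $R(z)=\begin{bmatrix}I&-H(z)\\-F(z)&I\end{bmatrix}$, identify $T(z)=R(z)^{-1}$ with the block form \eqref{T}, and obtain \eqref{PQFH} from the push-through identity $F(I-HF)=(I-FH)F$. The only difference is one of emphasis on the stability claim: the paper dispatches it in a single sentence---since the stationary inputs $r,v$ produce stationary $u,y$, the loop must be internally stable, hence $T(z)$ and its diagonal blocks $P(z),Q(z)$ are strictly stable---citing \cite{DFT,KZhou} rather than arguing it out; your two-step Hilbert-space justification (past-space equality plus ``causal rational filter between stationary processes has no poles in $|z|\ge1$'') is a more detailed version of exactly that principle, and the delicate points you flag (well-posedness at infinity, the rank-deficient case) are not addressed in the paper's proof either.
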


  \medskip

\begin{proof}
Since the stationary processes $v$ and $r$ produce stationary processes $y$ and $u$, the feedback model must be internally stable, and therefore $T(z)$ is (strictly) stable.
From \eqref{FuHy} we have
\begin{equation}
\label{rv2uy}
 \begin{bmatrix}u\\ y\end{bmatrix}=\begin{bmatrix}0 &H(z)\\F(z) & 0\end{bmatrix}\begin{bmatrix}u\\ y\end{bmatrix}+ \begin{bmatrix}r\\ v\end{bmatrix}
\end{equation}
         and therefore
\begin{displaymath}
R(z)\begin{bmatrix}u\\ y\end{bmatrix}=\begin{bmatrix}r\\ v\end{bmatrix} ,
\end{displaymath}
               where
 \begin{displaymath}
 R(z) :=\begin{bmatrix}I & -H(z)\\-F(z) & I\end{bmatrix}
 \end{displaymath}
is stable and full rank, being the transfer function from $u,y$ to $r,v$. Then the Schur complements  $(I - H(z)F(z))$ and $(I - F(z)H(z))$ are also full rank, and hence we can form the transfer functions $P(z)$ and $Q(z)$ as in \eqref{PQ}. Moreover,
\begin{displaymath}
\det P(z)= \det R(z) =\det Q(z),
\end{displaymath}
and hence $P(z)$ and $Q(z)$ are also strictly stable.
A straightforward calculation shows that $T(z)R(z)=I$, and hence
$T(z) = R(z)^{-1}$, as claimed. It is immediately seen that $FP^{-1}=Q^{-1}F$ and that $P^{-1}H=HQ^{-1}$, and therefore \eqref{PQFH} follows.
 \end{proof}

 \medskip

\begin{remark}\label{rem:sensitivity}
Note that $Q(z)$ is the output sensitivity function -- generally merely called the sensitivity function -- and $P(z)$ the input sensitivity function of the feedback system in Figure~\ref{fbdtfig}, and hence they must be strictly stable for this reason also \cite[p.82]{KZhou},  \cite{DFT}.
\end{remark}

 \medskip

 The $p\times p$ spectral density $\Phi(e^{i\theta})$ of rank $m$ has a $p\times m$ stable spectral factor
\begin{equation}
\label{W}
W(z)=C(zI-A)^{-1}B +D,
\end{equation}
such that
\begin{equation}
\label{eq:Phi}
W(z)W(z^{-1})'=\Phi(z).
\end{equation}
 Note that we do not require \eqref{W} to be minimum phase. In fact, we shall also cover the case when $W(z)$ has a zero at infinity.
Then $\zeta$ has a minimal stochastic realization
\begin{subequations}\label{eq:system}
\begin{align}
\label{eq:x}
x(t+1)&=Ax(t)+ Bw(t)\\
\label{eq:zeta}
\zeta(t)&=Cx(t) +Dw(t)
\end{align}
\end{subequations}
of dimension $n$ with $C\in \mR^{p\times n}$, $A\in \mR^{n\times n}$, $B\in \mR^{n\times m}$, $D\in \mR^{p\times m}$ and $\{w(t), t\in\mathbb{Z}\}$
an $m$-dimensional normalized white noise process such that  $\mathbb{E}\{w(t)w(t)'\} =I\delta_{ts}$. Moreover,  $A$ is a stability matrix, i.e., having all eigenvalues in the unit disc, and $(C,A)$ and $(A,B)$ are  observable and reachable pairs, respectively \cite{LPbook}.

 The white noise process $w$ in \eqref{eq:system} has the spectral representation
 \begin{displaymath}
w(t)=\int_{-\pi}^\pi e^{it\theta}d\hat{w} \quad \text{where}\quad \mathbb{E}\{d\hat{w}d\hat{w}^*\}= \frac{d\theta}{2\pi}I
\end{displaymath}
\cite{LPbook}.  Then
\begin{equation}\label{WNrepr}
 \begin{bmatrix}u(t)\\ y(t)\end{bmatrix} =\int_{-\pi}^\pi e^{it\theta}W(e^{i\theta})d\hat{w}.
\end{equation}
 Next since $v$ and $r$ are assumed uncorrelated, they can be represented as \begin{equation}
\label{eq:w2rv}
\begin{bmatrix}r(t)\\ v(t)\end{bmatrix}= \int_{-\pi}^\pi e^{it\theta}\begin{bmatrix}K(e^{i\theta})&0\\0&G(e^{i\theta})\end{bmatrix}d\hat{w},
\end{equation}
where the transfer functions  $K(z)$ of dimension $m\times m_1$ and $G(z)$ of dimension $(p-m)\times m_2$ for some $m_1\leq m$ and $m_2\leq (p-m)$ such that $m_1+m_2=m$. can be chosen minimum phase.   Moreover, it follows from Lemma~\ref{lem1} that
 \begin{displaymath}
W= \begin{bmatrix}P&PH\\QF&Q\end{bmatrix}\begin{bmatrix}K&0\\0&G\end{bmatrix}=  \begin{bmatrix}PK&PHG\\QFK&QG\end{bmatrix},
\end{displaymath}
which in view of \eqref{PQFH}  can be written
\begin{equation}
\label{Wdecomp1}
W= \begin{bmatrix}W_{11}&W_{12}\\W_{21}&W_{22}\end{bmatrix}=\begin{bmatrix}PK&HQG\\FPK&QG\end{bmatrix},
\end{equation}
where the subscripts refer to the partitioning of $W$ induced by the subdivision \eqref{Phidecomp}.
Consequently,
\begin{equation}
\label{eq:W2FH}
W_{21}=FW_{11}\quad\text{and}\quad W_{12}=HW_{22},
\end{equation}
so, if $W$ is full rank and thus $m=p$ as in \cite{Caines}, then $F=W_{21}W_{11}^{-1}$ and $H=W_{12}W_{22}^{-1}$. However, in our setting $m<p$ and hence $F$ and $H$ cannot both be uniquely determined from \eqref{eq:W2FH}.

By  \eqref{eq:w2rv} and  Lemma~\ref{lem1}, we have
\begin{equation}
\label{Phivr}
\Phi(z)=T(z) \begin{bmatrix}\Phi_r(z)&0\\0&\Phi_v(z)\end{bmatrix}T(z)^*,
\end{equation}
where $\Phi_v(z)=G(z)G(z)^*$ and $\Phi_r(z)=K(z)K(z)^*$ are  the spectral densities of $v$ and $r$, respectively, and $^*$ denotes transpose conjugate. Since $T$ has full rank a.e. (almost everywhere), $\Phi$ is rank deficient if and only if at least one of $\Phi_v$ or $\Phi_r$ is.

 \medskip

 \begin{theorem}\label{lem2}
Suppose $(H\Phi_vH^*+\Phi_r)$ is positive definite a.e. on unit circle. Then
\begin{equation}
\label{F}
F=\Phi_{yu}\Phi_u^{-1} - \Phi_vH^*(H\Phi_vH^*+\Phi_r)^{-1}(I-HF),
\end{equation}
that is
\begin{equation}
\label{Fred}
F=\Phi_{yu}\Phi_u^{-1}
\end{equation}
if and only if $\Phi_v H^*\equiv 0$.
\end{theorem}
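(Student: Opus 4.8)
The plan is to read off the relevant blocks of the spectral density directly from the feedback model and the transfer function $T(z)$ of Lemma~\ref{lem1}, and then to isolate $F$ by an elementary manipulation. First, from \eqref{FuHy} and \eqref{T} we have $u=Pr+PHv$ and $y=QFr+Qv$; since $v$ and $r$ are uncorrelated, extracting the $(1,1)$ and $(2,1)$ blocks of \eqref{Phivr} gives
\[
\Phi_u=P(\Phi_r+H\Phi_vH^*)P^*,\qquad \Phi_{yu}=Q(F\Phi_r+\Phi_vH^*)P^*.
\]
By hypothesis $\Phi_r+H\Phi_vH^*$ is positive definite a.e., and $P$ is invertible by Lemma~\ref{lem1}, so $\Phi_u$ is invertible a.e. with $\Phi_u^{-1}=(P^*)^{-1}(\Phi_r+H\Phi_vH^*)^{-1}P^{-1}$, whence
\[
\Phi_{yu}\Phi_u^{-1}=Q(F\Phi_r+\Phi_vH^*)(\Phi_r+H\Phi_vH^*)^{-1}P^{-1}.
\]

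Next I would establish \eqref{F}, i.e.\ that $\Phi_{yu}\Phi_u^{-1}-\Phi_vH^*(H\Phi_vH^*+\Phi_r)^{-1}P^{-1}=F$, recalling that $I-HF=P^{-1}$. Multiplying the claimed identity on the right by $P$ and using $FP=QF$ from \eqref{PQFH}, and then multiplying on the right by $(\Phi_r+H\Phi_vH^*)$, it collapses to
\[
Q(F\Phi_r+\Phi_vH^*)=QF(\Phi_r+H\Phi_vH^*)+\Phi_vH^*,
\]
that is $Q\Phi_vH^*=QFH\Phi_vH^*+\Phi_vH^*$, i.e.\ $Q(I-FH)\Phi_vH^*=\Phi_vH^*$, which holds because $Q=(I-FH)^{-1}$ by \eqref{PQ}. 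Re-inserting $I-HF$ for $P^{-1}$ then yields \eqref{F}. I expect no conceptual difficulty in this step; the only care needed is consistent bookkeeping of left versus right factors and disciplined use of the commutation relations \eqref{PQFH}, which is the natural place for things to go wrong.

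Finally, for the equivalence, \eqref{F} rewrites as $F-\Phi_{yu}\Phi_u^{-1}=-\Phi_vH^*(H\Phi_vH^*+\Phi_r)^{-1}(I-HF)$. Here $(H\Phi_vH^*+\Phi_r)$ is positive definite a.e.\ by hypothesis, hence invertible a.e., and $(I-HF)=P^{-1}$ is full rank a.e.\ by Lemma~\ref{lem1}; cancelling these two invertible right factors shows that the right-hand side vanishes a.e.\ if and only if $\Phi_vH^*\equiv 0$, which is exactly the assertion that \eqref{Fred} holds iff $\Phi_vH^*\equiv 0$. The substantive (though modest) point is precisely that the positivity/full-rank hypotheses are what licence this last cancellation, so I would make sure they are invoked there and nowhere spuriously earlier.
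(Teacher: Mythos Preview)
Your proof is correct and follows essentially the same route as the paper: both extract $\Phi_u$ and $\Phi_{yu}$ from \eqref{Phivr} and \eqref{TPQ}, then reduce to the identity $Q(I-FH)=I$ using the commutation relations \eqref{PQFH}. The only cosmetic difference is that the paper computes $\Phi_{yu}-F\Phi_u$ directly and simplifies, whereas you compute $\Phi_{yu}\Phi_u^{-1}$ and then verify \eqref{F} by clearing denominators; your explicit treatment of the ``if and only if'' direction (invoking invertibility of $H\Phi_vH^*+\Phi_r$ and of $P^{-1}=I-HF$ to cancel the right factors) is in fact more detailed than the paper's, which leaves that step implicit.
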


\medskip

\begin{proof}
Given \eqref{Phidecomp}, \eqref{TPQ} and \eqref{Phivr}, we have
\begin{align*}
 \Phi_u  &=  P(H\Phi_vH^*+\Phi_r)P^*=HQ\Phi_vH^*P^*+P\Phi_rP^* \\
   \Phi_{yu}&= Q(\Phi_vH^* +F\Phi_r)P^*=Q\Phi_vH^*P^* + FP\Phi_rP^*,
\end{align*}
where we have used \eqref{PQFH}, i.e., $QF=FP$ and $HQ=PH$.
Hence, in view of \eqref{PQFH},
\begin{displaymath}
 \Phi_{yu}-F \Phi_u  = (I-FH)Q\Phi_vH^*P^*=\Phi_vH^*P^*,
\end{displaymath}
which yields
\begin{displaymath}
 \Phi_{yu} \Phi_u ^{-1}-F=\Phi_vH^*(H\Phi_vH^*+\Phi_r)^{-1}P^{-1},
\end{displaymath}
from which \eqref{F} follows by \eqref{PQ}.
\end{proof}

\medskip

Next, we specialize to feedback models of rank deficient processes. We shall show that there are  feedback model representations  where the forward   channel is described by  a {\em deterministic relation} between $u$ and $y$.

\begin{theorem}\label{MainThm}
Let $\zeta$ be a $p$-dimensional process  of rank $m$. Any partition \eqref{eq:zeta2uy} with a full rank $m$-dimensional subvector process $u$ of $\zeta$ can be represented by an internally stable feedback model of the form
\begin{subequations}\label{FuHyv=0}
 \begin{align}
        y &= F(z)u \\
         u &= H(z)y + r,
\end{align}
\end{subequations}
   where the transfer functions $F(z)$  is  uniquely determined in terms the joint spectra of the two components $u$ and $y$ by formula \eqref{Fred} and the input noise $r$ is a stationary process of full rank $m$.
\end{theorem}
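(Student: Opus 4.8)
The plan is to deduce the theorem from Theorem~\ref{lem2} by showing that for a rank-deficient $\zeta$ one can always arrange the feedback decomposition \eqref{yu}--\eqref{FuHy} so that the forward-channel error $v$ vanishes identically, i.e. $\Phi_v\equiv 0$. First I would exploit the spectral-factorization picture assembled in \eqref{W}--\eqref{Wdecomp1}: since $u$ has full rank $m$ and $\zeta$ has rank exactly $m$, the $p\times m$ spectral factor $W$ has full column rank $m$ a.e., and its top block $W_{11}$ (of size $m\times m$) is square and full rank a.e. because $\Phi_u=W_{11}W_{11}^*$ is full rank by hypothesis. Then the relation $W_{21}=FW_{11}$ from \eqref{eq:W2FH} already forces $F$ to be uniquely determined, $F=W_{21}W_{11}^{-1}$; combining this with the block structure of $W$ and the identity $\Phi_{yu}=W_{21}W_{11}^*$, $\Phi_u=W_{11}W_{11}^*$ gives $F=\Phi_{yu}\Phi_u^{-1}$, i.e. formula \eqref{Fred}. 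So $F$ is pinned down regardless of the feedback environment.

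Next I would argue that with this $F$ the decomposition \eqref{yu} actually has $v\equiv 0$. The key point is that $y(t)$ lies in $\mathbf{H}_t^-(u)$ — in fact in $\mathbf{H}(u)$ — because $\zeta$ has rank $m$ and $u$ already carries the full $m$-dimensional ``innovation content'': from \eqref{WNrepr}, $\hat{u}$ and $\hat{w}$ generate the same subspace a.e. (as $W_{11}$ is invertible a.e.), hence $\hat{y}=W_{21}d\hat w = W_{21}W_{11}^{-1}d\hat u = F\,d\hat u$, so $y=F(z)u$ with no additive error. Whether this $y=F(z)u$ is a \emph{causal} relation, i.e. whether $y(t)\in\mathbf{H}_t^-(u)$ rather than merely $y(t)\in\mathbf{H}(u)$, is exactly the stability/causality question deferred to Section~\ref{sec:stabilitycausality}; for the present existence statement it suffices that $y=F(z)u$ holds as an identity between stationary processes with a proper rational $F$, and one still writes the projection \eqref{u2y} — its error term $v$ is zero precisely because $y$ is already a (two-sided) linear functional of $u$. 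Equivalently, in the language of Theorem~\ref{lem2}: having $v\equiv 0$ means $\Phi_v\equiv 0$, so trivially $\Phi_vH^*\equiv 0$ and \eqref{Fred} holds, consistent with the $F$ just computed.

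It remains to produce the feedback branch $u=H(z)y+r$ with $r$ a full-rank $m$-dimensional stationary input and the loop internally stable. Here I would take $H$ and $K$ from the minimum-phase factorization \eqref{eq:w2rv}--\eqref{Wdecomp1}: with $\Phi_v\equiv 0$ we get $m_2=0$, $m_1=m$, and $W_{11}=PK$, $W_{22}=QG$ degenerates; more directly, \eqref{y2u} defines $H=W_{12}W_{22}^{\dagger}$-type object via the projection of $u$ onto $\mathbf{H}_t^-(y)$, and its error $r$ has spectral density $\Phi_r=\Phi_u-\Phi_{uy}\Phi_y^{\dagger}\Phi_{yu}$-type expression which is full rank $m$ because $u$ is full rank and $y$ cannot reduce its rank (the residual of a full-rank process after projecting out another process is still full rank a.e. — this uses that $u$, not $y$, carries the innovations). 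Internal stability of the resulting configuration is automatic from Lemma~\ref{lem1}, since $u,y$ are jointly stationary with $r$ (equivalently $w$) stationary: $T(z)=R(z)^{-1}$ exists and is strictly stable.

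I expect the main obstacle to be the careful handling of the rank bookkeeping: one must verify rigorously that $\Phi_r$ has full rank $m$ — that projecting $u$ onto the past of the lower-rank process $y$ cannot collapse $u$'s rank — and that the chosen $H$ is a \emph{proper rational} transfer function making the closed loop well posed (i.e. $\det R(z)=\det(I-H(z)F(z))\not\equiv 0$). Both follow from the fact that the $m$-dimensional innovation of $\zeta$ is recoverable from $u$ alone, but stating this cleanly — and separating it from the genuinely harder causality question postponed to later sections — is the delicate part; the uniqueness of $F$ and the formula \eqref{Fred}, by contrast, are essentially immediate from $W_{11}$ being square and invertible a.e.
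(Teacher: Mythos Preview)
Your approach is correct but genuinely different from the paper's formal proof. You work directly with a $p\times m$ spectral factor $W$, observe that the top $m\times m$ block $W_u$ is invertible a.e.\ because $\Phi_u=W_uW_u^*$ has full rank, and read off $y=W_{yu}W_u^{-1}u$ and $F=\Phi_{yu}\Phi_u^{-1}$ immediately. The paper instead argues from the \emph{left null space} of $\Phi$: since $\rank\Phi=m$ there is a full-rank $(p-m)\times p$ rational matrix $[A(z)\;B(z)]$ with $[A\;B]\Phi=0$, hence $A(z)u+B(z)y\equiv 0$; it then shows $B(z)$ is invertible (using $\Phi_u$ full rank) and recovers $y=-B^{-1}Au$. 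Your route is shorter and more constructive (and is in fact the viewpoint the paper itself sketches in its Appendix~A and reuses in Section~\ref{sec:factormodels}); the null-space argument has the slight advantage of being phrased entirely in terms of $\Phi$ without choosing a spectral factor.

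Two small points where the paper's argument is cleaner than yours. First, your sentence ``its error term $v$ is zero precisely because $y$ is already a (two-sided) linear functional of $u$'' conflates the model \eqref{FuHyv=0} with the one-sided projection \eqref{u2y}: if $F$ is non-causal, $y=F(z)u$ does \emph{not} force the Wiener residual in \eqref{u2y} to vanish. This does not harm the existence claim---you only need to exhibit \emph{some} internally stable model of the form \eqref{FuHyv=0}---but you should not invoke \eqref{u2y} there. Second, for the full rank of $r$ you reach for a ``residual of a full-rank process'' argument, which is vague; the paper simply reads it off \eqref{Phivr}: with $\Phi_v\equiv 0$ and $T$ full rank, $\rank\Phi_r=\rank\Phi=m$. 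That one line replaces the obstacle you flagged.
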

\begin{proof} See Appendix \ref{appendixThm_2}.
\end{proof}

Hence  for rank-deficient processes there is a  fixed  deterministic dynamical relation from $u$ to $y$ as depicted in Figure~\ref{fig_specialFB}, where $F(z)$ is fixed, given by \eqref{Fred}. Note that  a nontrivial $H(z)$ will permit $F(z)$ to be unstable, as the feedback should  stabilize the feedback loop.
 \begin{figure}[h]%[thpb]
      \centering
      \includegraphics[scale=0.4]{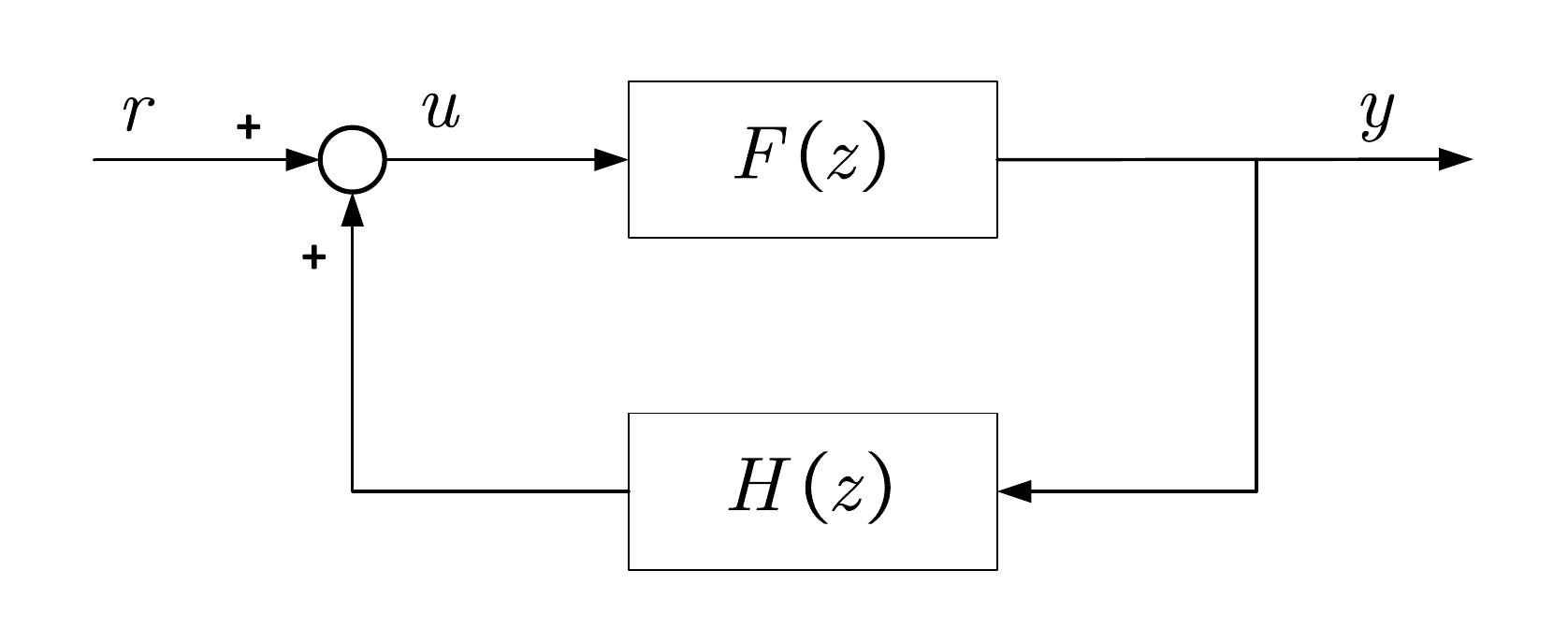}
      \caption{Feedback representation of rank-deficient processes}
      \label{fig_specialFB}
\end{figure}
\medskip

Manfred Deistler \cite{Deistler} has recently posed the question whether there is always a   selection of the  components in $\zeta$ admitting a stable $F(z)$. We shall answer this question with a counterexample  in the negative later, in section \ref{sec:stabilitycausality}. However before doing this we shall need to study the question of feedback stabilization of the system \eqref{FuHyv=0} in general and successively use  state space representations to discuss the question in depth.

\section{How to determine $H(z)$}\label{sec:H}
Given the transfer function $F(z)$ in \eqref{Fred}, determining the corresponding $H(z)$ in \eqref{FuHyv=0}  is a feedback stabilization problem studied in robust control \cite{DFT,KZhou}, which may have several solutions.  In fact, regarding $F(z)$ in  Fig.~\ref{fig_specialFB} as a plant, we need to determine a compensator $H(z)$ so that the feedback system is internally stable. As pointed out in Remark~\ref{rem:sensitivity}, the sensitivity function is
\begin{displaymath}
Q(z)=[I-F(z)H(z)]^{-1},
\end{displaymath}
i.e., the function $Q(z)$ in \eqref{PQ}. For the feedback system to be internally stable, $Q(z)$ needs to be analytic in the complement of the open  unit disc (i.e., strictly stable) and satisfy certain interpolation conditions in unstable poles and non-minimum-phase zeros of $F(z)$ \cite{DFT}. In addition, in robust control design, there must be a bound
\begin{displaymath}
\|Q\|_\infty \leq \gamma.
\end{displaymath}
There is a minimum bound $\gamma_\text{opt}$, but we shall just choose $\gamma > \gamma_\text{opt}$. Finally $Q(z)$ should be rational of small McMillan degree. This is an analytic  (Nevanlinna-Pick) interpolation problem with rationality constraint \cite{BGL1,BGLcdc99,GL2,BLN}.

Given a solution $Q(z)$ of this analytic interpolation problem, we can determine $H(z)$ from $F(z)H(z)=I - Q(z)^{-1}$ provided that the left pseudo-inverse $F^\dag:=(F^*F)^{-1}F^*$ exists. If $F(z)$ is a long rectangular matrix function and the pseudo-inverse $F^\dag:=F^*(FF^*)^{-1}$ exists, we may instead formulate an analytic interpolation problem for the input sensitivity function
\begin{displaymath}
P(z)=(I-H(z)F(z))^{-1},
\end{displaymath}
(Remark~\ref{rem:sensitivity}) and solve for $H(z)$ from  $H(z)F(z)=I-P(z)^{-1}$.

To explain the basic ideas of the procedure in simple terms we shall first consider the case that $F(z)$ and $H(z)$, and thus also $Q(z)$, are scalar, in which case the interpolation conditions are simple. Then $Q(z)$ must send the unstable poles of $F(z)$ to $0$ and the non-minimum phase zeros to $1$ \cite{DFT}. Moreover, the function $f(z):=\gamma^{-1}Q(z^{-1})$ is a Schur function, i.e., a function that is analytic in the open unit disc and maps it into the open unit disc (Figure~\ref{fig:Schur}).
\begin{figure}[h]%[thpb]
      \centering
      \includegraphics[scale=0.5]{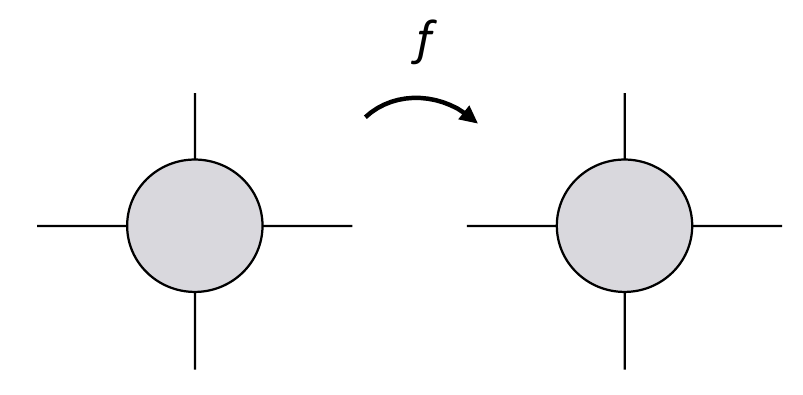}
      \caption{Schur function}
      \label{fig:Schur}
\end{figure}
Then, with the interpolation points $z_0,z_1\dots, z_\nu$,  the solutions of the analytic interpolation problem are completely parametrized by the polynomials in the class $\mathcal{S}$ of stable monic polynomials of degree $\nu$. More precisely, to each $\sigma\in\mathcal{S}$, there is a unique pair of polynomials $(\alpha, \beta)$ with $\alpha\in\mathcal{S}$ and $\beta$ a polynomial of at most degree $\nu$ such that $f(z)=\beta(z)/\alpha(z)$ satisfies the interpolation conditions and $|\alpha|^2-|\beta|^2=|\sigma|^2$ \cite{BGLM}.
 The solution corresponding to $\sigma$ is obtained by maximizing
\begin{equation}
\label{primal}
\int_{-\pi}^\pi \left|\frac{\sigma(e^{i\theta})}{\tau(e^{i\theta})}\right|^2 \log(1-|f(e^{i\theta})|^2)d\theta,
\end{equation}
 where $\tau(z)=\Pi_{k=0}^{\nu-1}(1-\bar{z}_kz)$, subject the interpolation conditions.\\
Choosing $\sigma(z)=z^n\tau(z^{-1})$, we obtain the central or maximum-entropy solution maximizing
 \begin{displaymath}
\int_{-\pi}^\pi  \log(1-|f(e^{i\theta})|^2)d\theta.
\end{displaymath}
which is a linear problem \cite{MustafaGlover}.

As a simple example,  let us consider the transfer function
 \begin{equation*}
\label{unstableF}
F(z) = \frac{26z-27}{2(2z-7)},
\end{equation*}
 which has one unstable pole at $z=7/2$ and one non-minimum phase zero at $z=27/26$, yielding the interpolation conditions $f(\xi_0)=0$ and $f(\xi_1)=\gamma^{-1}$, where $\xi_0=2/7$ and $\xi_1=26/27$. We may simplify the problem by moving the interpolation point $\xi_0$ to $z_0=0$, which can be done by the transformation
\begin{displaymath}
z=\frac{\xi -\xi_0}{1-\xi_0\xi}
\end{displaymath}
(for a real $\xi_0$), thus moving  $\xi_1$ to $z_1=(\xi_1-\xi_0)(1-\xi_0\xi_1)^{-1}=128/137$.

Alternatively, we may solve an analytic interpolation for a Carth{\'e}odory function
\begin{equation}
\label{phi}
\varphi(z)=\tfrac12\frac{1-f(z)}{1+f(z)},
\end{equation}
mapping the open unit disc to the open right half plane, yielding the interpolation conditions $\varphi(z_0)=\tfrac12$ and $\varphi(z_1)=\tfrac12(\gamma-1)(\gamma+1)^{-1}$. Then we can use the optimization procedures in \cite{BGL1,BGLcdc99} to determine an appropriate $Q(z)$.
%We can also apply the Riccati-type approach in \cite{CLtac} and obtain a solution  by the homotopy continuation method in subsection III-E of \cite{CLtac}.
However it is simpler to apply the Riccati approach in \cite{CLtac}, for which we have made the needed simple calculations in Appendix B.
%The parameter $\gamma$ has to be chosen so that the Pick condition in \cite[Proposition 3]{CLtac} is satisfied.}
Choosing the central solution, we obtain from \eqref{phicentral}, \eqref{phi} and \eqref{uU}
\begin{displaymath}
f(z)=\frac{1-2\varphi(z)}{1+2\varphi(z)}=-uz =\gamma^{-1}z_1^{-1}z,%=\frac{137}{128}\gamma^{-1}z,
\end{displaymath}
and hence, moving $z$ back to $\xi$,
%\begin{displaymath}
%Q(z)=\gamma f(z)=\xi_1^{-1}z,
%\end{displaymath}
\begin{displaymath}
  f(\xi)=f(z)\big|_{z=\frac{\xi -\xi_0}{1-\xi_0\xi}}=\gamma^{-1}z_1^{-1}\frac{\xi -\xi_0}{1-\xi_0\xi},
\end{displaymath}
then we have
\begin{displaymath}
  Q(z)=\gamma f(\xi)\big|_{\xi=z^{-1}}=z_1^{-1}\frac{1 -\xi_0z}{z-\xi_0}=\frac{137}{128}\frac{7 -2z}{7z-2},
\end{displaymath}
which is analytic outside the closed unit circle. It is easy to check that $Q$ satisfies $Q(7/2)=0$ and $Q(27/26)=1$, meaning that $Q(z)$ sends the unstable pole of $F(z)$ to $0$ and the non-minimum phase zero to $1$.
So,
\begin{displaymath}
\begin{split}
H(z)&=F(z)^{-1}(1-Q(z)^{-1})\\
&= \tfrac{2}{13}\frac{z-\xi_0^{-1}}{z-\xi_1^{-1}}(1-\frac{\xi_1-\xi_0}{1-\xi_0\xi_1}\frac{z-\xi_0}{1 -\xi_0z}),\\
&=\tfrac{2}{13}\frac{1-\xi_0^2}{1-\xi_0\xi_1}\frac{\xi_1}{\xi_0},
\end{split}
\end{displaymath}
that is
%\begin{displaymath}
%H(z)=\tfrac{13}{2}\frac{z-\xi_0}{z} =\tfrac{13}{2}\frac{z-\tfrac{2}{7}}{z},
%\end{displaymath}
\begin{displaymath}
  H(z)=\tfrac{90}{137},
\end{displaymath}
which is also stable.

The case when $Q(z)$ is matrix-valued is considerably more complicated, and we explain this in Section~\ref{Example4} in the context of a simple example, which we solve with the matrix version of the Riccati-type nonlinear  matrix equation \cite{CLtac}.
Moreover, we refer the reader to the literature, especially \cite{BLN,FPZbyrneslindquist,RFP,ZhuBaggio}. For the central solution, also see \cite{MustafaGlover}.

\section{Determining $F(z)$ from a spectral factor}\label{sec:Fconstruction}
Let
\begin{displaymath}
W(z)=C(zI-A)^{-1}B +D
\end{displaymath}
be  a minimal realization of dimension $n$ of a $p\times m$ spectral factor \eqref{W} of the system  \eqref{W}. We want to determine the unique $F(z)$ from $A,B,C$ and $D$.

 Following \cite{GLsampling}, in \cite{CLPcdc20} we gave a procedure to solve the problem in the continuous-time case. However,  then $D=0$, which considerably simplifies the situation. The discrete-time case requires to consider situations when $D$ is nonzero and with a rank $\rho\leq m$, which complicates the calculations considerably. We shall present a procedure for determining $F$ below; see Procedure 1.  To this end, we first need to develop the appropriate equations.

 Let $\rho$ be the rank of the $p\times m$ matrix $D$. To simplify calculations we shall first perform a singular value decomposition
\begin{subequations}\label{D2Sigma}
\begin{equation}
UDV'=\begin{bmatrix}\Sigma&0\\0&0\end{bmatrix},
\end{equation}
where $\Sigma$ is a diagonal $\rho\times \rho$ matrix consisting of the nonzero singular values, and $U$ and $V$ are orthogonal matrices of dimensions $p\times p$ and $m\times m$, respectively, i.e., $U'U=V'V=I$.  We assume that the corresponding transformations
\begin{equation}
\label{ }
(\zeta,w)\to (U\zeta,Vw)\quad\text{and}\quad (B,C)\to (BV',UC)
\end{equation}
have already been performed in \eqref{eq:system}. Moreover,
\begin{equation}
\Phi(z)\to U\Phi(z)U', \qquad W(z) \to UW(z)V'.
\end{equation}
\end{subequations}

Next partition the new matrices $C$ and $B$ as
\begin{equation}
\label{CBpartitioning}
C=\begin{bmatrix}C_0\\C_1\\C_2\end{bmatrix}
\qquad B=\begin{bmatrix}B_0&B_1\end{bmatrix},
\end{equation}
where $C_0$ $C_1$, $C_2$, $B_0$ and $B_1$ are $\rho\times n$, $(m-\rho)\times n$, $(p-m)\times n$, $n\times \rho$ and $n\times (m-\rho)$, respectively, after having changed, if necessary, the order of the component in $\zeta$ so that the square $(m-\rho)\times (m-\rho)$ matrix $C_1B_1$ is full rank. As we shall see below, this can always be done and in general in several different ways. Then the partitioning of $C$ leads to the representation
\begin{equation}
\label{eq:zetauy}
\zeta =\begin{bmatrix}u\\y\end{bmatrix}
\quad\text{where}\quad
u=\begin{bmatrix}u_0\\u_1\end{bmatrix}
\end{equation}
and the partitioning
\begin{equation}
\label{Wpart}
\begin{bmatrix}W_{00}&W_{01}\\W_{10}&W_{11}\\W_{20}&W_{21}\end{bmatrix}
\end{equation}
of the spectral factor \eqref{W}. (Note that this does not correspond to the decomposition \eqref{Wdecomp1}.) Consequently,
\begin{equation}
\label{W00}
W_{00}(z)=C_0(zI-A)^{-1}B_0 +\Sigma
\end{equation}
and
\begin{equation}
\label{Wjk}
W_{jk}(z)=C_j(zI-A)^{-1}B_k
\end{equation}
when $(j,k)\ne (0,0)$. Moreover, using Lemma~\ref{inverse} in the  appendix, we have
\begin{equation}
\label{W00inv}
W_{00}(z)^{-1} =\Sigma^{-1}\left[ I-C_0(zI-\Gamma_0)^{-1}B_0\Sigma^{-1}\right],
\end{equation}
where
\begin{equation}
\label{eq:Gamma0}
\Gamma_0=A-B_0\Sigma^{-1}C_0.
\end{equation}

\medskip

\begin{theorem}\label{thm:ufullrank}
Suppose that $C_1B_1$ is nonsingular. Then the $m$-dimensional process $u$ in \eqref{eq:zetauy} is full rank.
\end{theorem}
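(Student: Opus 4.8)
The plan is to work directly with the spectral factor $W$. Write $W_u(z)$ for the top $m\times m$ block of $W(z)$, i.e. the block whose rows are those indexed by $u$. Taking the $(u,u)$‑block of \eqref{eq:Phi} together with \eqref{Phidecomp} gives $\Phi_u(z)=W_u(z)W_u(z^{-1})'$, so $W_u$ is a stable spectral factor of $\Phi_u$. Since $W$ has real coefficients, on the unit circle $W_u(z^{-1})'=W_u(z)^{*}$, hence $\Phi_u(e^{i\theta})=W_u(e^{i\theta})W_u(e^{i\theta})^{*}$ and therefore $\rank\Phi_u(e^{i\theta})=\rank W_u(e^{i\theta})$ for every $\theta$. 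Consequently $u$ has full rank $m$ if and only if $\det W_u(z)\not\equiv 0$ (a rational function vanishing only at finitely many points), and this is the statement I would establish.

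Using \eqref{W00} and \eqref{Wjk}, the block $W_u$ reads
\[
W_u(z)=\begin{bmatrix}C_0(zI-A)^{-1}B_0+\Sigma & C_0(zI-A)^{-1}B_1\\ C_1(zI-A)^{-1}B_0 & C_1(zI-A)^{-1}B_1\end{bmatrix}.
\]
The naive idea of evaluating at $z=\infty$ fails: there $(zI-A)^{-1}\to 0$, leaving $W_u(\infty)=\bigl[\begin{smallmatrix}\Sigma&0\\0&0\end{smallmatrix}\bigr]$, which is singular. The key observation is that the lower‑right block vanishes only to first order, $C_1(zI-A)^{-1}B_1=z^{-1}C_1B_1+O(z^{-2})$, with leading coefficient $C_1B_1$ nonsingular by hypothesis. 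I would therefore rescale the last $m-\rho$ columns: put
\[
\widetilde W_u(z):=W_u(z)\begin{bmatrix}I_\rho&0\\0&zI_{m-\rho}\end{bmatrix},
\qquad\text{so that}\qquad
\det\widetilde W_u(z)=z^{\,m-\rho}\det W_u(z).
\]
Letting $z\to\infty$ and using $(zI-A)^{-1}=z^{-1}I+O(z^{-2})$ gives
\[
\widetilde W_u(z)\;\longrightarrow\;\begin{bmatrix}\Sigma & C_0B_1\\ 0 & C_1B_1\end{bmatrix},
\]
block upper triangular with invertible diagonal blocks ($\Sigma$ by the construction \eqref{D2Sigma}, $C_1B_1$ by assumption). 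Hence $\det\widetilde W_u(z)\to\det\Sigma\cdot\det(C_1B_1)\neq 0$, so $\det\widetilde W_u\not\equiv 0$ and thus $\det W_u\not\equiv 0$. By the first paragraph, $\Phi_u$ then has rank $m$ a.e. on the unit circle, i.e. $u$ is full rank.

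The only point requiring care, and the one I would flag as the main obstacle, is exactly this order‑of‑vanishing at infinity: a blunt evaluation at $z=\infty$ loses the information carried by $C_1B_1$, and one must either rescale the columns as above or, equivalently, expand $\det W_u=\det W_{00}\cdot\det\!\bigl(W_{11}-W_{10}W_{00}^{-1}W_{01}\bigr)$ in a neighbourhood of infinity, where $W_{00}\to\Sigma$ is invertible and the Schur complement behaves like $z^{-1}C_1B_1$. As a byproduct the argument also records that $\det W_u$ has exactly $m-\rho$ zeros at infinity, and it makes transparent why the reordering of the components of $\zeta$ imposed just before \eqref{eq:zetauy} — which secures a nonsingular choice of $C_1B_1$ — is the right hypothesis; a separate remark should verify that such a reordering always exists, as the surrounding discussion promises.
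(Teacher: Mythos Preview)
Your proof is correct, and it reaches the same conclusion by a genuinely different and more elementary route than the paper's. The paper proceeds via the Schur complement: using Lemma~\ref{inverse} it computes $W_{10}W_{00}^{-1}=C_1(zI-\Gamma_0)^{-1}B_0\Sigma^{-1}$ explicitly, then shows that the Schur complement of $W_u$ collapses to the closed form $S(z)=C_1(zI-\Gamma_0)^{-1}B_1$, whose leading Laurent coefficient at infinity is $C_1B_1$. Your column-rescaling argument bypasses this computation entirely---you never need $\Gamma_0$, Lemma~\ref{inverse}, or any resolvent algebra---and extracts the same leading behaviour directly from $z(zI-A)^{-1}\to I$. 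What the paper's longer route buys is the explicit formula $S(z)=C_1(zI-\Gamma_0)^{-1}B_1$, which is not incidental: it is reused verbatim in the proof of Theorem~\ref{thm:F(z)} to derive the expression for $F_1(z)$. So your argument is cleaner for Theorem~\ref{thm:ufullrank} in isolation, while the paper's is doing double duty. Your closing remark already anticipates the Schur-complement variant, so you are evidently aware of the connection; it would be worth noting that the paper pursues exactly that variant, and to full precision rather than only asymptotically.
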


\medskip

\begin{proof}
See  Appendix~\ref{appendixthm3}
\end{proof}

\medskip

\begin{theorem}\label{thm:F(z)}
Suppose that the order of the components in $\zeta$ is chosen so that $C_1B_1$ is nonsingular.
Then the  transfer function $F(z)$ mapping $u$ to $y$ is given by
\begin{equation}
\label{F(z)}
F(z)=\left[F_0(z),F_1(z)\right],
\end{equation}
where
\begin{subequations}\label{F0F1}
\begin{align}
F_0(z)&= C_2(zI-\Gamma_1)^{-1}\notag\\
&\times\left[I-B_1(C_1B_1)^{-1}C_1\right]B_0\Sigma^{-1}\label{eq:F0}\\
F_1(z)&=zC_2(zI-\Gamma_1)^{-1}B_1(C_1B_1)^{-1}\label{eq:F1}\\
           &=C_2\Gamma_1(zI-\Gamma_1)^{-1}B_1(C_1B_1)^{-1}+C_2B_1(C_1B_1)^{-1}\label{eq:F1alt}
\end{align}
\end{subequations}
with $\Gamma_0$ given by \eqref{eq:Gamma0} and $\Gamma_1$ by
\begin{equation}
\label{eq:Gamma1}
\Gamma_1= \Gamma_0-B_1(C_1B_1)^{-1}C_1\Gamma_0.
\end{equation}
\end{theorem}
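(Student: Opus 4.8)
The plan is to eliminate the driving white noise from the minimal realization \eqref{eq:system} directly at the state-space level, in two successive steps that exploit the two standing invertibility facts: $\Sigma$ is invertible by construction of the SVD \eqref{D2Sigma}, and $C_1B_1$ is nonsingular by hypothesis. Writing $w=\begin{bmatrix}w_0\\ w_1\end{bmatrix}$ split according to $B=[B_0,B_1]$ and using the partitioning \eqref{CBpartitioning}--\eqref{eq:zetauy}, the realization reads $x(t+1)=Ax(t)+B_0w_0(t)+B_1w_1(t)$, $u_0(t)=C_0x(t)+\Sigma w_0(t)$, $u_1(t)=C_1x(t)$ and $y(t)=C_2x(t)$.

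First I would solve the $u_0$-equation for $w_0(t)=\Sigma^{-1}\bigl(u_0(t)-C_0x(t)\bigr)$ and substitute it into the state recursion, which becomes $x(t+1)=\Gamma_0x(t)+B_0\Sigma^{-1}u_0(t)+B_1w_1(t)$ with $\Gamma_0=A-B_0\Sigma^{-1}C_0$ as in \eqref{eq:Gamma0}. Since $u_1(t)=C_1x(t)$ still does not display $w_1(t)$, I would then advance the $u_1$-equation by one step, $u_1(t+1)=C_1x(t+1)=C_1\Gamma_0x(t)+C_1B_0\Sigma^{-1}u_0(t)+C_1B_1w_1(t)$, and invoke the nonsingularity of $C_1B_1$ to solve
\begin{equation*}
w_1(t)=(C_1B_1)^{-1}\bigl(u_1(t+1)-C_1\Gamma_0x(t)-C_1B_0\Sigma^{-1}u_0(t)\bigr).
\end{equation*}
Substituting this back into the state recursion yields a system driven only by $u_0$ and the one-step-ahead $u_1$,
\begin{equation*}
x(t+1)=\Gamma_1x(t)+\bigl[I-B_1(C_1B_1)^{-1}C_1\bigr]B_0\Sigma^{-1}u_0(t)+B_1(C_1B_1)^{-1}u_1(t+1),
\end{equation*}
where $\Gamma_1=\bigl[I-B_1(C_1B_1)^{-1}C_1\bigr]\Gamma_0$ coincides with \eqref{eq:Gamma1}.

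Passing to transfer functions with the shift operator $z$, so that $u_1(t+1)$ becomes $zu_1$, and then reading off $y=C_2x$, one obtains $y=F_0(z)u_0+F_1(z)u_1$ with $F_0$ and $F_1$ exactly in the forms \eqref{eq:F0} and \eqref{eq:F1}; the equivalent expression \eqref{eq:F1alt} follows from the identity $z(zI-\Gamma_1)^{-1}=I+\Gamma_1(zI-\Gamma_1)^{-1}$, and $(zI-\Gamma_1)$ is invertible as a rational matrix function since $\det(zI-\Gamma_1)\not\equiv 0$.

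The step that needs care is the interpretation rather than the algebra. One must verify that the expressions derived above for $w_0(t)$ and $w_1(t)$ in terms of $x$, $u_0$ and $u_1$ are genuine identities between the underlying stationary processes in the Hilbert space generated by $w$ (equivalently by $\zeta$), so that $y=F_0(z)u_0+F_1(z)u_1$ is an honest process identity of the form $y=F(z)u$ with $F(z)=[F_0(z),F_1(z)]$; that this is then the unique such transfer function follows from Theorem~\ref{thm:ufullrank}, since full rank of $u$ forbids a nonzero rational row vector from annihilating the $m\times m$ spectral factor of $u$ formed by the first two block rows of \eqref{Wpart}. Two features are worth flagging, though neither obstructs the proof: the use of the advanced variable $u_1(t+1)$ makes the relation in general anticipative in $u_1$ (the $z$ factor in $F_1$), which is precisely where the causality analysis of Section~\ref{sec:stabilitycausality} will enter; and $\Gamma_1$ is rank-deficient, factoring through the projector $I-B_1(C_1B_1)^{-1}C_1$, so $F_0$ and $F_1$ may carry fewer than $n$ poles, which is relevant to the degree bound discussed later but not to the identity proved here.
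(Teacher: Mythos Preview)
Your argument is correct, and it is genuinely different from the paper's proof. The paper stays in the frequency domain: it writes $y=[W_{20}\ W_{21}]\begin{bmatrix}W_{00}&W_{01}\\W_{10}&W_{11}\end{bmatrix}^{-1}u$, solves the resulting linear system for $F_0,F_1$ via the Schur complement $S(z)=C_1(zI-\Gamma_0)^{-1}B_1$ already obtained in the proof of Theorem~\ref{thm:ufullrank}, and then simplifies the products $T(z)S(z)^{-1}$ and $W_{20}W_{00}^{-1}-F_1W_{10}W_{00}^{-1}$ using Lemma~\ref{inverse} and repeated resolvent identities of the type $(zI-\Gamma_0)(zI-\Gamma_1)^{-1}$. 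Your route is a direct state-space elimination: first remove $w_0$ via the invertible $\Sigma$, then remove $w_1$ by shifting $u_1$ one step and using the invertible $C_1B_1$, which immediately delivers the $\Gamma_1$-recursion and hence $F_0,F_1$. Your approach is shorter and makes transparent why exactly two invertibility hypotheses are needed and why the extra $z$ in $F_1$ appears (the one-step advance of $u_1$); the paper's approach, by contrast, reuses the Schur-complement computation from Theorem~\ref{thm:ufullrank} and keeps everything at the spectral-factor level, which ties in more directly with the uniqueness statement \eqref{Fred} and Remark~\ref{rem:Funique}. Your closing remark that uniqueness follows from the full rank of $u$ is the right justification that the $F$ you construct coincides with $\Phi_{yu}\Phi_u^{-1}$.
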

\begin{proof}
See Appendix~\ref{appendixthm4}
\end{proof}

\medskip

\begin{remark}\label{rem:Funique}
In view of \eqref{Fred},  $F(z)$ is uniquely determined by the decomposition \eqref{eq:zeta2uy} and the corresponding spectral density \eqref{Phidecomp}, so \eqref{F(z)} does not depend on the particular choice of spectral factor $W(z)$ used in constructing it.
\end{remark}

\medskip

\begin{corollary}\label{Fcor}
The transfer function $F(z)$ given by \eqref{F(z)} is (strictly) stable if and only if $\Gamma_1$ has all its eigenvalues in the (open) unit disc.
\end{corollary}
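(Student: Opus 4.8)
The plan is to read the statement directly off the explicit realization \eqref{F(z)}--\eqref{F0F1}. For the ``if'' direction I would first collapse the two blocks into a single proper realization: using \eqref{eq:F1alt} to trade the factor $z$ in \eqref{eq:F1} for a constant term, one obtains
\begin{displaymath}
F(z)=C_2(zI-\Gamma_1)^{-1}\big[\,(I-B_1(C_1B_1)^{-1}C_1)B_0\Sigma^{-1},\ \Gamma_1B_1(C_1B_1)^{-1}\,\big]+\big[\,0,\ C_2B_1(C_1B_1)^{-1}\,\big],
\end{displaymath}
which is a proper rational matrix function whose poles are contained in the set of eigenvalues of $\Gamma_1$. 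Hence if every eigenvalue of $\Gamma_1$ lies in the open unit disc, then $(zI-\Gamma_1)^{-1}$ is analytic and bounded on $\{|z|\ge1\}\cup\{\infty\}$ and $F(z)$ is strictly stable; the same argument with ``open'' replaced by ``closed'' gives the parenthetical version.

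For the ``only if'' direction the task is to exclude a pole--zero cancellation that could hide an eigenvalue of $\Gamma_1$ of modulus $\ge1$; equivalently, to show that in the realization above every such mode is both reachable and observable. Reachability I would settle by the PBH test: an unreachable mode at $\lambda$ would furnish a nonzero $w$ with $w^{*}\Gamma_1=\lambda w^{*}$, $w^{*}(I-B_1(C_1B_1)^{-1}C_1)B_0=0$ and $w^{*}\Gamma_1B_1(C_1B_1)^{-1}=0$; as $\lambda\neq0$ the last relation forces $w^{*}B_1=0$, hence $w^{*}(I-B_1(C_1B_1)^{-1}C_1)=w^{*}$, and then $\Gamma_1=(I-B_1(C_1B_1)^{-1}C_1)\Gamma_0$ together with $\Gamma_0=A-B_0\Sigma^{-1}C_0$ gives $w^{*}A=\lambda w^{*}$ and $w^{*}B=0$, contradicting reachability of $(A,B)$ in the minimal realization \eqref{eq:system}.

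The observability half is where I expect the real work to be. An unobservable mode at $\lambda$ (with $|\lambda|\ge1$, hence $\lambda\neq0$) produces $v\neq0$ with $\Gamma_1v=\lambda v$ and $C_2v=0$; since $C_1\Gamma_1=0$ (because $C_1(I-B_1(C_1B_1)^{-1}C_1)=0$), the eigenvector also satisfies $C_1v=0$, and substituting \eqref{eq:Gamma0}--\eqref{eq:Gamma1} one checks that the Rosenbrock system matrix $\bmat\lambda I-A & B\\ -C & D\emat$ drops rank at $\lambda$, i.e.\ $\lambda$ is an invariant zero of the minimal realization of $W(z)$. Thus no cancellation of an eigenvalue of modulus $\ge1$ can occur once $W(z)$ is taken with no transmission zeros in $\{|z|\ge1\}$; by Remark~\ref{rem:Funique}, $F(z)$ and hence its poles do not depend on the spectral factor, so one may carry out the construction with a minimum-phase $W(z)$, for which this holds. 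With that, every eigenvalue of $\Gamma_1$ of modulus $\ge1$ is a genuine pole of $F(z)$, so stability (resp.\ strict stability) of $F(z)$ forces all eigenvalues of $\Gamma_1$ into the closed (resp.\ open) unit disc, completing the equivalence. The delicate point, and the one I would scrutinize, is precisely this observability argument and its reliance on the freedom to choose $W(z)$.
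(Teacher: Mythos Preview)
Your ``if'' direction --- collapsing \eqref{F0F1} via \eqref{eq:F1alt} into the single proper realization $F(z)=C_2(zI-\Gamma_1)^{-1}\hat B+\hat D$ --- is exactly what the paper does; in fact that rewriting \emph{is} the entirety of the paper's proof, which then simply says ``consequently the corollary follows'' without discussing pole--zero cancellation. So for the sufficiency you match the paper, and for the necessity you are not missing a trick the authors have: they do not argue it.

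Where you go beyond the paper, your reachability step is clean and correct: an unreachable mode of $(\Gamma_1,\hat B)$ at $\lambda\neq0$ indeed forces $w^{*}B=0$ and $w^{*}A=\lambda w^{*}$, contradicting reachability of $(A,B)$. The observability step, however, has the gap you yourself flag, and it is a real one. You correctly show that an unobservable mode of $(C_2,\Gamma_1)$ at $\lambda\neq0$ makes $\lambda$ an invariant zero of the particular spectral factor $W(z)$ used in the construction. You then invoke Remark~\ref{rem:Funique} to pass to a minimum-phase $W$. But that remark says only that $F(z)$ is invariant under the choice of spectral factor; the matrix $\Gamma_1$ is \emph{not} --- it is built from the specific $(A,B,C,D)$ of the chosen $W$. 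Replacing $W$ by the minimum-phase factor yields a different $\Gamma_1$, so proving the equivalence for \emph{that} $\Gamma_1$ says nothing about the $\Gamma_1$ in the statement. As it stands, your argument establishes the ``only if'' direction only under the extra hypothesis that $W$ has no invariant zeros in $\{|z|\ge1\}$; for a general (possibly non-minimum-phase) $W$ as explicitly allowed in the paper, the converse remains open, and the paper does not close it either.
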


\medskip

\begin{proof}
Since $\Gamma_1(zI -\Gamma_1)^{-1} =(zI -\Gamma_1)^{-1} \Gamma_1$, it follows from \eqref{F0F1} that
\begin{subequations}\label{Falt}
\begin{equation}
F(z)=C_2(zI -\Gamma_1)^{-1} \hat{B}+\hat{D},
\end{equation}
where
\begin{align}
\label{}
 \hat{B}   &  =\begin{bmatrix}(I-B_1(C_1B_1)^{-1}C_1)B_0\Sigma^{-1}&\Gamma_1B_1(C_1B_1)^{-1}\end{bmatrix} \\
 \hat{D}   &  =\begin{bmatrix}0&C_2B_1(C_1B_1)^{-1}\end{bmatrix},
\end{align}
\end{subequations}
and consequently the corollary follows.
\end{proof}

\medskip

Note that, since $C_1\in\mathbb{R}^{(m-\rho)\times n}$ and $B_1\in\mathbb{R}^{n\times (m-\rho)}$, it is necessary that $m-\rho\leq n$ for $C_1B_1$ to be nonsingular.  Clearly, the McMillan degree of $F(z)$ is at most $n$. In special cases to be considered below, the McMillan degree  will depend on the rank of the matrix  $\Gamma_1$. To this end, we shall need the following lemma.
\medskip

\begin{lemma}\label{lem:Gamma1rank}
Suppose that $\rho<m$ and that $B_1(C_1B_1)^{-1}C_1$ has $n$ linearly independent eigenvectors. Then
\begin{displaymath}
\text{\rm rank}\,\Gamma_1 \leq n-(m-\rho).
\end{displaymath}
\end{lemma}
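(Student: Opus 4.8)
The plan is to recognize the matrix $\Pi := B_1(C_1B_1)^{-1}C_1$ occurring in \eqref{eq:Gamma1} as an (oblique) projector and then read off the rank of its complement. Rewriting \eqref{eq:Gamma1} as $\Gamma_1=(I-\Pi)\Gamma_0$, it suffices, since the rank of a product is at most the rank of its left factor, to establish that $\rank(I-\Pi)=n-(m-\rho)$.

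First I would check that $\Pi^2=B_1(C_1B_1)^{-1}(C_1B_1)(C_1B_1)^{-1}C_1=\Pi$, so $\Pi$ is idempotent; an idempotent matrix is automatically diagonalizable with eigenvalues in $\{0,1\}$, which both explains why the $n$ linearly independent eigenvectors assumed in the statement exist and lets one use such an eigenbasis explicitly. Next I would compute $\rank\,\Pi$: since $C_1B_1$ is an invertible $(m-\rho)\times(m-\rho)$ matrix, $B_1$ has full column rank $m-\rho$; the columns of $\Pi$ lie in the range of $B_1$, while the identity $\Pi B_1=B_1$ shows conversely that the range of $B_1$ lies in the range of $\Pi$, so the two ranges coincide and $\rank\,\Pi=m-\rho$. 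Hence $I-\Pi$, being the complementary idempotent (the projector onto $\ker\Pi$ along the range of $\Pi$), has rank $n-\rank\,\Pi=n-(m-\rho)$; equivalently, in an eigenbasis $V$ of $\Pi$ the matrix $V^{-1}(I-\Pi)V$ is block diagonal with an identity block of size $n-(m-\rho)$ and a zero block. Combining, $\rank\,\Gamma_1=\rank((I-\Pi)\Gamma_0)\le\rank(I-\Pi)=n-(m-\rho)$, which is the claim.

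I do not expect a genuine obstacle. The one step needing care is the rank count $\rank\,\Pi=m-\rho$, which rests on the standing assumption, in force throughout Section~\ref{sec:Fconstruction}, that the components of $\zeta$ have been ordered so that $C_1B_1$ is nonsingular, together with $\rho<m$ so that the block $B_1$ is actually present and the bound is nontrivial. It is nonetheless worth retaining the eigenvector hypothesis, since the explicit form $\Gamma_1=(I-\Pi)\Gamma_0$ with $I-\Pi$ block diagonal in the eigenbasis of $\Pi$ is precisely what is needed to pin down the McMillan degree of $F(z)$ in the special cases treated afterwards.
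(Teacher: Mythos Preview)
Your argument is correct and follows the same overall architecture as the paper: write $\Gamma_1=(I-\Pi)\Gamma_0$ with $\Pi=B_1(C_1B_1)^{-1}C_1$, show that $I-\Pi$ has rank $n-(m-\rho)$, and bound the rank of the product by that of its left factor.

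The difference is in how the rank of $I-\Pi$ is obtained. The paper invokes an auxiliary result (Lemma~\ref{similaritylem}, that $AB$ and $BA$ share nonzero eigenvalues) to conclude that $\Pi$ has exactly $m-\rho$ eigenvalues equal to $1$, and then \emph{uses} the hypothesis of $n$ independent eigenvectors to diagonalize $\Pi$ and read off the block structure of $I-\Pi$. You instead verify $\Pi^2=\Pi$ directly, which yields diagonalizability for free and identifies $\Pi$ as a projector onto $\mathrm{range}\,B_1$; the rank then drops out of the range description. Your route is more self-contained (no appeal to Lemma~\ref{similaritylem}) and, as you correctly observe, shows that the eigenvector hypothesis in the statement is automatically satisfied rather than an extra assumption. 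The paper's version has the mild advantage of making the diagonal block form of $I-\Pi$ explicit in a similarity basis, which is the form reused downstream in Theorem~\ref{McMillanthm}; but your remark that this block form is available in any eigenbasis of $\Pi$ covers that need as well.
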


\medskip

\begin{proof}
By Lemma~\ref{similaritylem} in the appendix, the nonzero eigenvalues of $B_1(C_1B_1)^{-1}C_1$ are the same as those of $C_1B_1(C_1B_1)^{-1} = I_{m-\rho}$, so $B_1(C_1B_1)^{-1}C_1$ has $m-\rho$ nonzero eigenvalues all equal to 1. Then there is an $n\times n$ matrix $T$ such that
\begin{displaymath}
T^{-1}B_1(C_1B_1)^{-1}C_1T= \begin{bmatrix}I_{(m-\rho)}&  \\& 0_{n-(m-\rho)}\end{bmatrix},
\end{displaymath}
and therefore
\begin{displaymath}
T^{-1}(I-B_1(C_1B_1)^{-1}C_1)T=\begin{bmatrix}0_{(m-\rho)}&  \\&I_{n-(m-\rho)}\end{bmatrix}.
\end{displaymath}
Then, since $\Gamma_1=(I-B_1(C_1B_1)^{-1}C_1)\Gamma_0$, the statement of the lemma follows.
\end{proof}

\medskip

We summarize by formulating a procedure for  calculating the matrix function $F(z)$ in the model \eqref{FuHyv=0} from a minimal realization of any minimal stable spectral factor $W(z)$ of the spectral density $\Phi(z)$ of $\zeta(t)$.

\medskip

%{\bf Computional procedure:}
\begin{computation}\label{comp}
\phantom{xxxxxxxxxxxxxx}
\begin{enumerate}
  \item Perform the transformations \eqref{D2Sigma}.
  \item Rearrange the last $p-\rho$ components of $\zeta$ so that the square $(m-\rho)\times (m-\rho)$ matrix $C_1B_1$ is nonsingular. In general this can be done in several different ways.
  \item Determine $F(z)$ from \eqref{F(z)}, \eqref{F0F1} or \eqref{Falt}.
\end{enumerate}
\end{computation}

\subsection{The special case $D=0$}\label{subsec:D=0}

When $D=0$, we have $\rho=0$, and hence $C_0=B_0=0$. More precisely,
\begin{equation}
\label{CBpartitioningD=0}
C=\begin{bmatrix}C_1\\C_2\end{bmatrix}\qquad B=B_1.
\end{equation}
Then Step 1) in Procedure~\ref{comp} is not required,
and Step 3) is simplified. In fact,
$F_0(z)=0$ and hence $F(z)=F_1(z)$. Moreover,  $\Gamma_0=A$ and

\begin{displaymath}
\Gamma_1=\left[I-B(C_1B)^{-1}C_1\right]A.
\end{displaymath}
Consequently, \eqref{eq:F1} yields
\begin{subequations}\label{Ffor D=0}
\begin{equation}
\label{eq:FforD=0}
F(z)=zC_2(zI-\Gamma_1)^{-1}B(C_1B)^{-1}
\end{equation}
or alternatively from \eqref{eq:F1alt}
\begin{equation}
\label{eq:FaltforD=0}
F(z)=C_2\Gamma_1(zI-\Gamma_1)^{-1}B(C_1B)^{-1}+C_2B(C_1B)^{-1}.
\end{equation}
\end{subequations}

The realizations \eqref{Ffor D=0} are in general not minimal, as  under the conditions of Lemma~\ref{lem:Gamma1rank}, $\Gamma_1$ has at least one zero eigenvalue, and hence a pole in zero which will cancel the factor $z$ in \eqref{eq:F1}.  In fact, by the next theorem, all zero eigenvalues will be cancelled, and the McMillan degree of $F(z)$ will be reduced accordingly.

\medskip

\begin{theorem}\label{McMillanthm}
Suppose $B(C_1B)^{-1}C_1$ has $n$ linearly independent eigenvectors. Then $F(z)$ has McMillan degree at most $n-m$.
\end{theorem}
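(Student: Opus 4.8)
The plan is to work from the second realization \eqref{eq:FaltforD=0} rather than from \eqref{eq:FforD=0}: the factor $z$ in \eqref{eq:FforD=0} is misleading, since it does not produce a genuine pole at the origin but is cancelled against a zero eigenvalue of $\Gamma_1$. Writing
\begin{displaymath}
F(z)=C_2\Gamma_1(zI_n-\Gamma_1)^{-1}B(C_1B)^{-1}+C_2B(C_1B)^{-1},
\end{displaymath}
$F(z)$ has a state-space realization whose dynamic matrix is $\Gamma_1\in\mathbb R^{n\times n}$, and it suffices to exhibit an equivalent realization of dimension at most $n-m$.

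First I would invoke Lemma~\ref{lem:Gamma1rank}. In the present ($D=0$) setting $\rho=0<m$ and $B_1=B$, so the hypothesis of the theorem is precisely the hypothesis of that lemma, and it gives $\rank\Gamma_1\le n-(m-\rho)=n-m$. (In fact $B(C_1B)^{-1}C_1$ is idempotent, because $C_1B$ is invertible, hence automatically diagonalizable, so the eigenvector assumption is never an actual restriction.)

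The key step is then to turn this rank bound into a genuine reduction of the state dimension. Put $r:=\rank\Gamma_1\le n-m$ and take a full-rank factorization $\Gamma_1=LR$ with $L\in\mathbb R^{n\times r}$ of full column rank and $R\in\mathbb R^{r\times n}$ of full row rank. From the identity $(zI_r-RL)R=R(zI_n-LR)$ one obtains the push-through relation $R(zI_n-LR)^{-1}=(zI_r-RL)^{-1}R$, and hence
\begin{displaymath}
\Gamma_1(zI_n-\Gamma_1)^{-1}=LR(zI_n-LR)^{-1}=L(zI_r-RL)^{-1}R .
\end{displaymath}
Substituting into the display above yields
\begin{displaymath}
F(z)=(C_2L)(zI_r-RL)^{-1}\bigl(RB(C_1B)^{-1}\bigr)+C_2B(C_1B)^{-1},
\end{displaymath}
a realization of $F(z)$ with state dimension $r\le n-m$, whence the McMillan degree of $F(z)$ is at most $n-m$.

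The one real obstacle is conceptual rather than computational: the bare rank of the dynamic matrix of a realization does not by itself bound the McMillan degree, so one must exploit the special form $\Gamma_1(zI-\Gamma_1)^{-1}$ — equivalently, the fact that the spurious factor $z$ in \eqref{eq:FforD=0} pairs with the $m$-fold zero eigenvalue of $\Gamma_1$ — and carry the cancellation through cleanly so that no pole at $z=0$ is left behind. An equivalent, more hands-on route that I would keep in reserve is to pass to a basis of $\mathbb R^n$ adapted to the splitting into the range of $I-P$ and the range of $P$, where $P:=B(C_1B)^{-1}C_1$; there $\Gamma_1=(I-P)A$ is block upper triangular with a trailing zero block of size $m$, and eliminating those $m$ states in \eqref{eq:FaltforD=0} again produces a realization of dimension $n-m$.
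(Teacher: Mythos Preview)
Your proof is correct, but the route differs from the paper's. The paper argues via observability: in the realization \eqref{eq:FaltforD=0} the output matrix is $C_2\Gamma_1$, so every block row $C_2\Gamma_1^k$ of the observability matrix factors on the right through $\Gamma_1$; hence the observability matrix has rank at most $\rank\Gamma_1\le n-m$ (by Lemma~\ref{lem:Gamma1rank}), the unobservable subspace has dimension at least $m$, and a Kalman reduction yields a realization of dimension at most $n-m$. You instead take a full-rank factorization $\Gamma_1=LR$ and use the push-through identity to rewrite $\Gamma_1(zI-\Gamma_1)^{-1}=L(zI_r-RL)^{-1}R$, producing the reduced realization explicitly. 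Your approach is more constructive and does not appeal to minimal-realization theory, while the paper's argument is shorter once one is willing to invoke observability. Your side remark that $P=B(C_1B)^{-1}C_1$ is idempotent, hence always diagonalizable, is a genuine observation the paper does not make: it shows the hypothesis of the theorem is in fact vacuous.
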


\medskip

\begin{proof}
The observability matrix of \eqref{eq:FaltforD=0} is
\begin{displaymath}
\begin{bmatrix}
C_2\Gamma_1\\
C_2(\Gamma_1)^2\\
\vdots\\
C_2(\Gamma_1)^{(p-m)}
\end{bmatrix}
=
\begin{bmatrix}
C_2\\
C_2\Gamma_1\\
\vdots\\
C_2(\Gamma_1)^{p-m-1}
\end{bmatrix}
\Gamma_1 ,
\end{displaymath}
which has at most the same rank as $\Gamma_1$. However, by Lemma~\ref{lem:Gamma1rank}, $\text{\rm rank}\,\Gamma_1<n-m$, so the realization \eqref{eq:FaltforD=0} is not observable and hence not minimal. In fact, the dimension of the unobservable subspace is at least $m$, so the dimension of $F(z)$ can be reduced from $n$ to $n-m$.
\end{proof}

\medskip

If $m = n$, $C_1$ and $B$ are both $m\times m$ matrices. Therefore, since rank$(C_1B)=m$,  they must both be invertible. Then $ \Gamma_1 = (I - B(C_1B)^{-1}C_1)\Gamma_0 = 0$, and consequently
    $$F(z) = C_2C_1^{-1}$$
is constant and hence strictly stable.

\subsection{The special case $D$ full rank}

When $D$ has full rank, $\rho=m$ and $C_1=B_1=0$, and therefore
\begin{displaymath}
C=\begin{bmatrix}C_0\\C_2\end{bmatrix}\qquad B=B_0.
\end{displaymath}
Then Step 2) in Procedure~\ref{comp} is not needed. Moreover,
$F_1(z)=0$ and hence $F(z)=F_0(z)$,
\begin{displaymath}
\Gamma_1=\Gamma_0=A-B\Sigma^{-1}C_0,
\end{displaymath}
and
\begin{equation}
\label{eq:FforDfullrank}
F(z)= C_2(zI-\Gamma_0)^{-1}B\Sigma^{-1}.
\end{equation}

\section{Causality and stability}\label{sec:stabilitycausality}

The ordering of the element of $\zeta$ in the decomposition \eqref{eq:zeta2uy} is in general not unique, and different choices may create feedback models with different stability and causality properties.

\subsection{Stability of $F(z)$}\label{ss:stability}

For the process $\zeta$ to be stationary, the feedback configuration in Figure~\ref{fbdtfig} needs to be internally stable \cite{DFT}. However, $F(z)$ does not need to be stable, as the feedback model can be stabilized by feedback.

The following simple counterexample answers Manfred Deistler's  question mentioned at the end of  Sect. \ref{sec:feedback},  in  the negative.

Let $\Phi$ be a spectral density  with
$$
  A = \begin{bmatrix}
        \frac{3}{2} & 2\\
        -1 & -\frac{3}{2}
        \end{bmatrix},
    B = \begin{bmatrix}
           2 \\ -3
        \end{bmatrix},
    C = \begin{bmatrix}
            -4 & -2\\
            -2 & 3
        \end{bmatrix},  D=\begin{bmatrix}
                                        0 \\0
                                      \end{bmatrix},
    $$
which has rank $m=1$. This is a process of the type studied in subsection \ref{subsec:D=0}, where $C$ is decomposed as \eqref{CBpartitioningD=0}.  There are two choices of ordering of the components of $\zeta$.

First choose $u=\zeta_1$ and $y=\zeta_2$. Then
$$
        C_1 = \begin{bmatrix}
            -4 & -2
        \end{bmatrix},\quad
        C_2 = \begin{bmatrix}
            -2 & 3
        \end{bmatrix},
    $$
and
    $$
   \Gamma_1 = (I - B(C_1B)^{-1}C_1)A
            = \begin{bmatrix}
            -\frac{5}{2} & -3 \\
            5 & 6 \end{bmatrix},
    $$
which has rank 1 with eigenvalue $0$ and $\frac{7}{2}$. Moreover,
  \begin{equation}
\label{unstableF1}
F(z) = \frac{26z-27}{2(2z-7)},
\end{equation}
which is unstable.

Next, choose $u=\zeta_2$ and $y=\zeta_1$. Then
$$
        C_1 = \begin{bmatrix}
            -2 & 3
        \end{bmatrix},\quad
        C_2 = \begin{bmatrix}
            -4 & -2
        \end{bmatrix},
    $$
and
    $$
   \Gamma_1 = (I - B(C_1B)^{-1}C_1)A
            = \begin{bmatrix}
            \frac{15}{26} & \frac{9}{13} \\
            \frac{5}{13} & \frac{6}{13} \end{bmatrix},
    $$
which has rank 1 with eigenvalue $0$, $\frac{27}{26}$. This yields
 \begin{equation}
\label{unstableF2}
   F(z) = \frac{2(2z-7)}{26z-27},
\end{equation}
which is again unstable.

Consequently, there is no selection of the order of the components in $\zeta$ admitting a stable $F(z)$.  Indeed, $F(z)$ depends only on $\Phi(z)$ and not on the particular choice of spectral factor \eqref{W} (Remark~\ref{rem:Funique}), so we only have the two $F(z)$ obtained above.

\subsection{Granger causality}

If we want to predict the future of $y$ given the past of $y$, would we get a better estimate if we also know the past of $u$? If so, we have {\em Granger causality from $u$ to $y$} \cite{Granger63,Granger69,BS17,PD12,ADD19}. Let us consider the negative situation that there is no such advantage. In mathematical terms, we have non-causality if and only if
\begin{equation}
\label{noncausality}
\mathbb{E}^{\Hb_t^-(y)\vee\Hb_t^-(u)}\lambda =\mathbb{E}^{\Hb_t^-(y)}\lambda \quad \text{for all $\lambda\in\Hb_t^+(y)$}
\end{equation}
\cite[Definition 1]{Granger69}, where $\mathbb{E}^{\mathbf{A}}\lambda$ denotes the orthogonal projection of $\lambda$ onto the subspace $\mathbf{A}$ and $\vee$ is vector sum, i.e., $\mathbf{A}\vee\mathbf{B}$ is the closure in the Hilbert space of stochastic variables of the sum of the subspaces $\mathbf{A}$ and $\mathbf{B}$; see, e.g., \cite{LPbook}. With $A\ominus B$ the orthogonal complement of $B\subset A$ in $A$, \eqref{noncausality} can also be written
\begin{displaymath}
\mathbb{E}^{\Hb_t^-(y)}\lambda +\mathbb{E}^{[\Hb_t^-(y)\vee\Hb_t^-(u)]\ominus\Hb_t^-(y)}\lambda =\mathbb{E}^{\Hb_t^-(y)}\lambda
\end{displaymath}
for all $\lambda\in\Hb_t^+(y)$, which is equivalent to
\begin{displaymath}
\left[\Hb_t^-(y)\vee\Hb_t^-(u)\right]\ominus\Hb_t^-(y)\perp \Hb_t^+(y),
\end{displaymath}
where $\mathbf{A}\perp\mathbf{B}$ means that the subspaces $\mathbf{A}$ and $\mathbf{B}$ are orthogonal. Then, using the equivalence between properties (i) and (v) in \cite[Proposition 2.4.2]{LPbook}, we see that this in turn is equivalent to the following geometric condition for lack of Granger causality
\begin{equation}
\label{condorth}
\Hb_t^-(u)\perp \Hb_t^+(y)\mid \Hb_t^-(y),
\end{equation}
i.e., $\Hb_t^-(u)$ and $\Hb_t^+(y)$ are conditionally orthogonal given $\Hb_t^-(y)$. Hence, if the past of $y$ is known, the future of $y$ is uncorrelated to the past of $u$, and therefore $$\mathbb{E}\{y(t)\mid \Hb_t^-(u)\}=0,$$  so, in view of \eqref{u2y}, lack of Granger causality is equivalent to $F(z)\equiv 0$. Conversely, we have Granger causality from $u$ to $y$ if and only if $F(z)$ is nonzero.

An analogous argument applied to \eqref{y2u} yields the the geometric condition
\begin{equation}
\label{feedbackfree}
\Hb^-(y)\perp \Hb^+(u)\mid \Hb^-(u),
\end{equation}
which is equivalent to $H(z)\equiv 0$. Then there is no feedback from $y$ to $u$  \cite[p. 677]{LPbook}. Consequently, as stressed in \cite{Caines}, Granger causality and feedback are dual concepts.
In the setting of Section~\ref{sec:Fconstruction} we must have $H(z)$ nonzero if $F(z)$ is not strictly stable, because it is needed for stabilization of the feedback loop. Conversely, if $H(z)$ is zero, $F(z)$ must be strictly stable.

\medskip

\begin{theorem}
Consider the feedback model \eqref{FuHy}, and in particular, \eqref{FuHyv=0}. Then there is causality from $u$ to $y$ in  the sense of Granger if and only if $F(z)$ is nonzero, and there is no feedback from $y$ to $u$ if and only if $H(z)$ is identically zero. In this case $F(z)$ is (strictly) stable.
\end{theorem}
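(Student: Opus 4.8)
The plan is to assemble the theorem from the geometric characterizations already derived above, adjoining one short observation on stability. First I would dispatch the Granger-causality equivalence. By \cite[Definition 1]{Granger69}, lack of causality from $u$ to $y$ is the condition \eqref{noncausality}, which was reduced above (via \cite[Proposition 2.4.2]{LPbook}) to the conditional orthogonality \eqref{condorth}, and hence to $\mathbb{E}\{y(t)\mid\Hb_t^-(u)\}=0$. Comparing with \eqref{u2y}, and using that $u$ has full rank $m$ so that $F\Phi_uF^*\equiv 0$ forces $F\equiv 0$, this is equivalent to $F(z)\equiv 0$. Taking the contrapositive yields: there is Granger causality from $u$ to $y$ if and only if $F(z)$ is not identically zero.

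Next I would dualize the argument. Repeating the same chain of equivalences with \eqref{y2u} in place of \eqref{u2y} and $\Hb^-(y),\Hb^+(u),\Hb^-(u)$ in place of $\Hb_t^-(u),\Hb_t^+(y),\Hb_t^-(y)$, the feedback-free condition \eqref{feedbackfree} is equivalent to $\mathbb{E}\{u(t)\mid\Hb_t^-(y)\}=0$, i.e.\ to $H(z)\equiv 0$; this is the second equivalence in the statement. It then remains only to verify the last sentence: when $H(z)\equiv 0$, the transfer function $F(z)$ is strictly stable. Setting $H\equiv 0$ in \eqref{PQ} gives $P(z)=Q(z)=I$, so by \eqref{TPQ} the closed-loop transfer function is $T(z)=\left[\begin{smallmatrix}I&0\\ F(z)&I\end{smallmatrix}\right]$. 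By Lemma~\ref{lem1}, $T(z)$ is strictly stable, and since in this block-triangular form strict stability of $T(z)$ is equivalent to strict stability of its $(2,1)$ block, $F(z)$ is strictly stable. Equivalently, in the model \eqref{FuHyv=0} with $H\equiv 0$ one has $y=F(z)r$ with $r$ a full-rank stationary process, and stationarity of $y$ together with minimality precludes poles of $F(z)$ outside the closed unit disc.

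Since each step is an immediate consequence of results already established in the preceding sections, there is no genuine obstacle here; the theorem is essentially a consolidation. The only point that calls for a separate remark is the final stability claim, and it is settled by invoking the internal stability of the feedback loop as recorded in Lemma~\ref{lem1}.
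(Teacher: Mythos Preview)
Your proposal is correct and follows essentially the same route as the paper: the theorem is a consolidation of the conditional-orthogonality reductions \eqref{condorth} and \eqref{feedbackfree} already carried out in the text, together with the internal-stability observation. Your treatment is in fact slightly more explicit than the paper's on two points---you spell out why $F(z)u\equiv 0$ forces $F\equiv 0$ via the full rank of $\Phi_u$, and you derive the strict stability of $F$ by reading it off as the $(2,1)$ block of $T(z)$ when $H\equiv 0$, whereas the paper simply remarks that a nonzero $H$ ``is needed for stabilization of the feedback loop''---but these are refinements within the same argument, not a different approach.
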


\section{Singular Dynamic Network Models}\label{sec:networks}

From \eqref{rv2uy} and \eqref{eq:w2rv} we have
\begin{subequations}\label{networkmodel}
\begin{equation}
\label{ }
\zeta(t)=M(z)\zeta(t)+N(z)w(t),
\end{equation}
where
\begin{equation}
\label{ }
M(z)=\begin{bmatrix}0 &H(z)\\F(z) & 0\end{bmatrix}, \quad N(z)=\begin{bmatrix}K(z)&0\\0&G(z)\end{bmatrix}.
\end{equation}
\end{subequations}
We may choose the $p\times m$ matrix $N(z)$ to be stable and have a left stable inverse. Then with the diagonal elements of $M(z)$ all identically zero, \eqref{networkmodel} corresponds to a dynamic network model with a noise term $N(z)w(t)$ and no exogenous input \cite{WVD18}.
For a more detailed description we could choose
\begin{align*}
  M(z) & = \begin{bmatrix}
             0 & M_{12} & \cdots & M_{1p} \\
             M_{21} & 0 & \cdots & M_{2p} \\
             \vdots & \vdots & \ddots & \vdots \\
             M_{p1} & M_{p2} & \cdots & 0
           \end{bmatrix},
\end{align*}
but we shall stick with to  the formulation with blocks of  zeros.
Such  models describe the dynamical dependencies between components of a multivariate stationary stochastic process
in terms of a network whose links are dynamical relations. They play an important role in understanding the underlying
mechanisms of complex systems in econometrics, biology and engineering \cite{nweco10,nwbio09, nweng91}.

To connect the network model \eqref{networkmodel} to the the feedback representation of Section~\ref{sec:feedback} we observe that
\begin{displaymath}
\zeta(t)=(I-M(z))^{-1}N(z)w(t)
\end{displaymath}
and that
\begin{displaymath}
N(z)w(t)=\begin{bmatrix} r(t)\\v(t)\end{bmatrix},
\end{displaymath}
and consequently
\begin{equation}
\label{ }
(I-M(z))^{-1}=T(z),
\end{equation}
which, by Lemma~\ref{lem1},  is strictly stable and has a representation \eqref{TPQ}.

In the following, we shall discuss the relation between the special feedback structure \eqref{FuHyv=0} and singular dynamic networks, and show that a dynamic network can be simplified by using \eqref{FuHyv=0}. As $v=0$ we take
\begin{displaymath}
N(z)=\begin{bmatrix}K(z)\\0\end{bmatrix}
\end{displaymath}
and $N(z)w(t)=r$. This satisfies a necessary condition of the identifiability of a singular dynamic network \cite{WVD18}. Different from the situation in \cite{WVD18}, by using this simplified model, the identifiability and identification of a singular dynamic network can be directly obtained from our research on identification to low rank processes \cite{CPLauto21}.

Recovering or reconstructing the topology of a dynamic network by identifying the matrix $M(z)$ is important when the prior information about the topology is scarce, or some nodes are not measurable. Considering the sparsity propoerty and interconnections in a large-scale dynamic network, singular dynamic network models are increasingly popular these days, see e.g., \cite{BCV18, GLsampling,WVD18, BGHP17, Ferrante-18letters}.

We will give a simple example of modeling the topology and the transfer matrix $M(z)$ for a singular dynamic network.
Suppose $\zeta=[\zeta_1, \zeta_2, \zeta_3,\zeta_4]'$ has $4$ scalar nodes of rank $2$, with $u(t) :=[\zeta_1(t), \zeta_2(t)]'$ full rank. Then with $y(t) :=[\zeta_3(t), \zeta_4(t)]'$, \eqref{FuHyv=0} becomes
\begin{align*}
  \begin{bmatrix}
    \zeta_3(t)  \\
    \zeta_4(t)
  \end{bmatrix} &=F(z) \begin{bmatrix}
    \zeta_1(t)  \\
    \zeta_2(t)
  \end{bmatrix},\\
  \begin{bmatrix}
    \zeta_1(t)  \\
    \zeta_2(t)
  \end{bmatrix}&=H(z) \begin{bmatrix}
    \zeta_3(t)  \\
    \zeta_4(t)
  \end{bmatrix} + K(z)w(t),
\end{align*}
where $K(z)$ is full rank $2\times 2$, and $w(t)$ is a white noise of dimension $2$. Suppose that, after calculation, $F(z)$ is diagonal and $H(z)$ is upper triangular.
Then, a simplified inner topology can be constructed for $\zeta(t)$, compared with a general one. As shown in Fig.~\ref{fig:DNM}, by introducing the special feedback structure, the inner topology of this dynamic network can be simplified from having possibly $12$ edges to only $5$ edges.

\begin{figure}[!ht]
  \centering
  \includegraphics[scale=0.44]{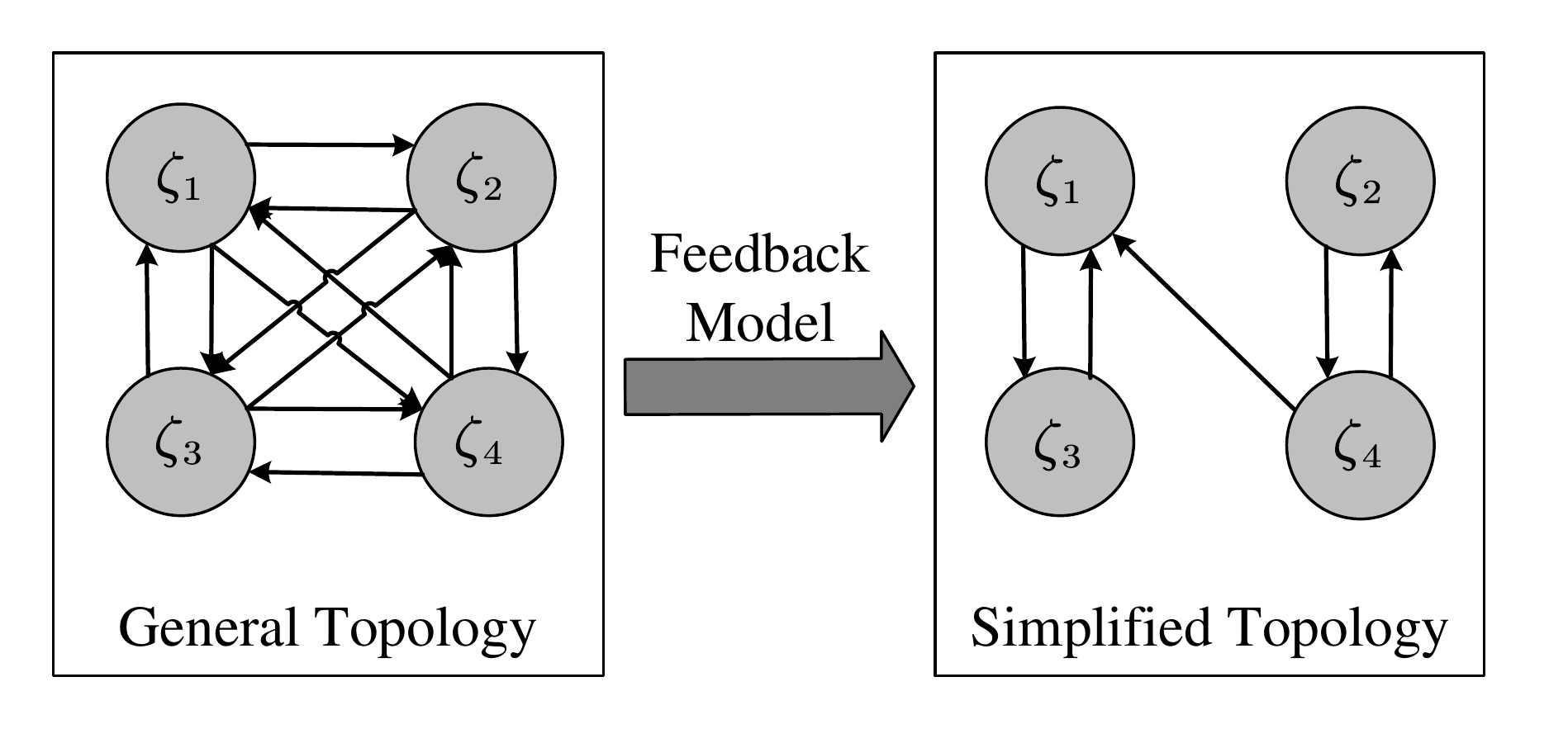}
  \caption{The inner topology of a dynamic network in the simple example.}\label{fig:DNM}
\end{figure}

\section{Connections to dynamic factor models}\label{sec:factormodels}

Suppose that we want to model a large dimensional stationary process  $\{\eta(t), t\in\mathbb{Z}\}$, assumed zero-mean and of full rank by a {\em Dynamic Factor Analysis } model. This amounts to decomposing its  spectral density, say $\Psi(z)$,   into a sum of a low-rank spectral density $\Phi(z)$ and a diagonal full-rank spectral density $\Delta(z)$ which in principle should be diagonal, although  this condition has been somewhat relaxed in the literature  \cite{Forni-L-2001,Forni-H-L-R-2000,Stock-W-2010}. This corresponds to the decomposition
\begin{displaymath}
\eta(t) =\zeta(t) +\omega(t),
\end{displaymath}
where $\{\omega(t), t\in\mathbb{Z}\}$ is a full rank noise process with uncorrelated components and $\{\zeta(t), t\in\mathbb{Z}\}$, called  a {\em latent process}, has density $\Phi(z)$ having (hopefully very) low rank $m<n$.  By possibly rearranging the components of $\zeta(t)$, this latent process can  be decomposed (in several ways) into two components as in \eqref{eq:zeta2uy}, i.e.,
\begin{equation*}
%\label{eq:zeta2uy}
\zeta(t) =\begin{bmatrix}u(t)\\y(t)\end{bmatrix},
\end{equation*}
where $u(t)$ is chosen of full rank $m$. Then $\zeta(t)$ can be modeled    by a feedback system of a special structure  as    \eqref{FuHyv=0}. We shall study the latent process in this context.

 Then
\begin{equation}
\label{zetaWu}
\zeta = \begin{bmatrix} I\\F(z)\end{bmatrix}u,
\end{equation}
so that $u$ plays the role of a {\em factor process} of minimal dimension and \eqref{Phidecomp} becomes
\begin{equation}
\label{Phifactorization}
\Phi(z)=\begin{bmatrix} I\\F(z)\end{bmatrix}\Phi_u(z)\begin{bmatrix} I\\F(z^{-1})\end{bmatrix}'.
\end{equation}
where $\Phi_u(z)$ is full rank. Note that the factor process may not be unique when actually doing estimation, and, even once the decomposition \eqref{eq:zeta2uy} is fixed,  it is {\em apriori} not clear how it may be constructed from the data.

There  has been a widespread interest in estimating $u$ from the observable data \cite{Doz-G-R-2011,Ferrante-20cdc}.  Now the feedback representation \eqref{FuHyv=0} provides a partial answer to this question as it shows that:

\medskip
\begin{corollary}
Every minimal factor process $u$ can be constructed by a noisy feedback
$$
u(t)= H(z) y(t) + r(t)
$$
on the ``dependent" (or residual) variables $y(t)$.
\end{corollary}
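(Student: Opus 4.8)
The plan is to read this corollary as a restatement of Theorem~\ref{MainThm} in the language of dynamic factor analysis, so that the proof is short and consists mostly of identifying the objects involved. First I would recall that, in the present setting, a minimal factor process $u$ is precisely a full-rank $m$-dimensional subvector of the latent process $\zeta$ of rank $m$, extracted through a partition \eqref{eq:zeta2uy} after possibly reordering the components of $\zeta$; the complementary $(p-m)$-dimensional subvector is the residual (``dependent'') process $y$. That such a $u$ deserves the name \emph{minimal} factor is confirmed by \eqref{zetaWu}: the relation $y=F(z)u$ together with \eqref{eq:zeta2uy} is exactly \eqref{zetaWu}, so $u$ generates $\zeta$ through a transfer function while its dimension $m$ equals $\rank\Phi$, which is the least possible.

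Next I would invoke Theorem~\ref{MainThm} directly. Since $\zeta$ has rank $m$ and the chosen $u$ has full rank $m$, the theorem provides an internally stable feedback representation \eqref{FuHyv=0}, whose second equation is exactly
\[
u(t)=H(z)y(t)+r(t),
\]
with $r$ a stationary process of full rank $m$. This is the asserted noisy feedback on the dependent variables $y$, which proves the corollary. For completeness one may add that if a minimal factor process is instead defined abstractly (not as a subvector of $\zeta$), then any two minimal factor processes differ by left multiplication by an $m\times m$ transfer function invertible almost everywhere, and the feedback representation transports accordingly; but since $u$ is taken to be a subvector of $\zeta$ throughout the paper, this extra step is not needed.

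The point requiring care — rather than a genuine obstacle — is to make sure the feedback is genuinely \emph{noisy}, i.e. that $r$ is a nondegenerate, full-rank input and not identically zero. This is already part of the conclusion of Theorem~\ref{MainThm}, but it is precisely what makes the statement informative, since it says that the factor cannot in general be reconstructed as a noiseless function of the residuals alone. One should also note that $H(z)$ need not be unique: it must be chosen so that the feedback loop is internally stable, as discussed in Section~\ref{sec:H}, so the corollary asserts the existence of such a representation, not its uniqueness.
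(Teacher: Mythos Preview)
Your proposal is correct and matches the paper's approach exactly: the corollary is presented in the paper as an immediate consequence of the feedback representation \eqref{FuHyv=0} from Theorem~\ref{MainThm}, with no separate proof given, and you have correctly identified that the second equation of \eqref{FuHyv=0} is the desired noisy feedback once $u$ is recognized as a minimal factor via \eqref{zetaWu}. Your additional remarks on the full rank of $r$ and the non-uniqueness of $H(z)$ are also consistent with the paper's own commentary immediately following the corollary.
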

\medskip
Note again that there is non-uniqueness in this representation. In particular there are infinitely many pairs $(F,H)$ which yield the same transfer function $r\rightarrow y$ of the feedback system
 \eqref{FuHyv=0} and hence the same spectral density $\Phi(z)$.

In view of \eqref{Phifactorization} the spectral factor \eqref{W} of $\zeta$ can be written
\begin{equation}
\label{Wdecomp}
W(z)=\begin{bmatrix}W_u(z)\\W_{yu}(z)\end{bmatrix}=\begin{bmatrix} I\\F(z)\end{bmatrix}W_u(z),
\end{equation}
with
\begin{displaymath}
\Phi_u(z)=W_u(z)W_u(z^{-1})',
\end{displaymath}
where $W_u(z)$ is a stable spectral factor, and
\begin{equation}
\label{ }
F(z)=W_{yu}(z)W_u(z)^{-1}.
\end{equation}
As mentioned above, $u$ plays the role of a minimal dynamic factor \cite{EJC,Deistler19}. Moreover,
\begin{equation}
\label{ }
\begin{bmatrix}-F(z)&I\end{bmatrix}W(z)=0,
\end{equation}
so $\begin{bmatrix}-F&I\end{bmatrix}$ is the rational matrix function whose rows form a basis for the left kernel of $W$; cf. \cite[Section 5]{Deistler19}.
This configuration is illustrated in Figure \ref{fig:WuF}, where $w$ is the generating white noise in the realization \eqref{eq:system}. More precisely, the transfer function $W_y(z)$ from $w$ to $y$  is a cascade of two  transfer functions which we can compute.
\begin{figure}[h]%[thpb]
      \centering
      \includegraphics[scale=0.5]{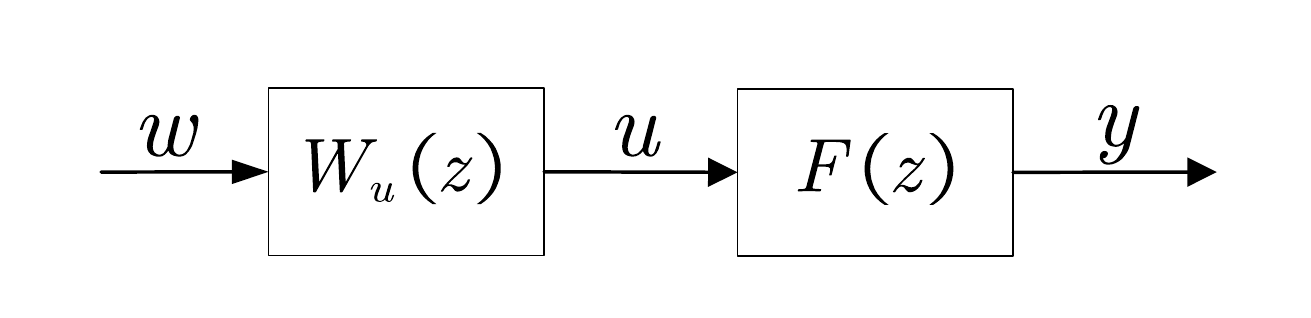}
      \caption{Dynamical relation from $w$ to $u$ to $y$}
      \label{fig:WuF}
\end{figure}

Introducing the decompositions
\begin{displaymath}
C=\begin{bmatrix}C_u\\C_{yu}\end{bmatrix}, \quad D=\begin{bmatrix}D_u\\D_{yu}\end{bmatrix}
\end{displaymath}
in the format of \eqref{Wdecomp}, the latent process $\zeta$ has the representation \eqref{zetaWu} with the square $m\times m$ spectral factor $W_u(z)$ having the realization
\begin{equation}
\label{Wureal}
W_u(z)=C_u(zI-A)^{-1}B + D_u,
\end{equation}
where, in the notation of Section~\ref{sec:Fconstruction},
\begin{equation}
\label{ }
C_u=\begin{bmatrix}C_0\\C_1\end{bmatrix}, \quad D_u=\begin{bmatrix}\Sigma&0\\0&0\end{bmatrix}.
\end{equation}
In particular,
\begin{equation}
\label{WuD=0}
W_u(z) =C_1(zI-A)^{-1}B
\end{equation}
in the special case that $D=0$, and
\begin{equation}
\label{ }
W_u(z) =C_0(zI-A)^{-1}B  +\Sigma
\end{equation}
when $D$ is full rank.

Let us illustrate this with the simple example in subsection~\ref{ss:stability}, taking the first choice of ordering of the components in $\zeta$, namely $u=\zeta_1$ and $y=\zeta_2$.  Then $ C_1 = \begin{bmatrix}-4 & -2\end{bmatrix}$, so \eqref{WuD=0} yields
\begin{displaymath}
W_u(z)=\frac{28-8z}{4z^2-1},
\end{displaymath}
which together with \eqref{unstableF2} yields the transfer functions in Figure~\ref{fig:WuF}.   If instead we choose $u=\zeta_2$ and $y=\zeta_1$,
\begin{displaymath}
W_u(z)=\frac{54-52z}{4z^2-1},
\end{displaymath}
and $F(z)$ given by \eqref{unstableF2}.

\section{Examples}\label{sec:examples}

We begin by giving examples illustrating the results in Section~\ref{sec:Fconstruction}.
We shall consider three different situations, namely that $D=0$ and $D$ is full rank,  as well as the mixed case when $0< \rank D < m$.
Finally, we shall give an example for how to determine $H(z)$ when $Q(z)$ is a matrix.

\subsection{Example 1: $D=0$}
Let $\Phi(z)$ be a spectral density with $D=0$ and
$$
    A = \begin{bmatrix}
        1 & 0 & -\frac{3}{2}\\
        \frac{7}{10} & \frac{1}{5} & -\frac{7}{5}\\
        \frac{1}{2} & 0 &-1
        \end{bmatrix},\;
    B = \begin{bmatrix}
            1 \\ 1 \\ -3
        \end{bmatrix},\;
    C = \begin{bmatrix}
            3 & -3 & -3\\
            2& 0& -1\\
            0& 1& 0\\
            0 & 0 & 1
        \end{bmatrix},
$$
which has rank $m=1$. First take $C_1$ to be the first row of $C$, i.e., $u=\zeta_1$,  $y=(\zeta_2,\zeta_3,\zeta_4)'$, and
$$
        C_1 = \begin{bmatrix}
            3 & -3 & -3
        \end{bmatrix},\quad
        C_2 = \begin{bmatrix}
            2& 0& -1\\
            0& 1& 0\\
            0 & 0 & 1
        \end{bmatrix}.
$$
Then, $C_1B=9$, $B(C_1B)^{-1}C_1$ has $n=3$ independent eigenvectors, and
    $$
            \Gamma_1 = (I - B(C_1B)^{-1}C_1)A
            = \begin{bmatrix}
            \frac{16}{15} & \frac{1}{15} & -\frac{9}{5}\\
            \frac{23}{30} & \frac{4}{15} & -\frac{17}{10}\\
            \frac{3}{10} & -\frac{1}{5} & -\frac{1}{10} \end{bmatrix},
    $$
which has rank two with eigenvalues $\frac{9}{10}$, $\frac{1}{3}$ and $0$. However, by Theorem~\ref{McMillanthm}, the pole at zero will cancel, and we obtain
$$
    F(z)=  \frac{1}{3(10z-9)(3z-1)}\begin{bmatrix}
                    5(2z+3)(5z-1) \\ 10z^2 + 49z - 13 \\ (7 - 6z)(5z-1)
                    \end{bmatrix}
$$
which is strictly stable of degree two rather than three.

Since the McMillan degree of $F(z)$ is two, it has a minimal realization of dimension two. One such realization is given by
\begin{displaymath}
F(z)=\tilde{C}(zI-\Gamma)^{-1}\tilde{B} +\tilde{D},
\end{displaymath}
where
$$
    \Gamma = \begin{bmatrix} \frac{1}{3} & 0\\ 0 &  \frac{9}{10}\end{bmatrix},
        \tilde{B} = \begin{bmatrix} -\frac{10}{459}\\\frac{28}{17} \end{bmatrix},
        \tilde{C} = \begin{bmatrix} 11 & 1\\ 4 & \frac{7}{15}\\ 3 & \frac{1}{15} \end{bmatrix},
    \tilde{D} = \begin{bmatrix} \frac{5}{9} \\ \frac{1}{9} \\ -\frac{1}{3}\end{bmatrix}.
$$

Next, take $C_1$ to be the second row of $C$, i.e., $u=\zeta_2$,  $y=(\zeta_1,\zeta_3,\zeta_4)'$, and
$$
        C_1 = \begin{bmatrix}
            2& 0& -1
        \end{bmatrix},\quad
        C_2 = \begin{bmatrix}
            3& -3& -3\\
            0& 1& 0\\
            0 & 0 & 1
        \end{bmatrix}.
    $$
Then, $C_1B=5$, and
$$
        \Gamma_1 = (I - B(C_1B)^{-1}C_1)A
            = \begin{bmatrix}
            \frac{7}{10} & 0 & -\frac{11}{10}\\
            \frac{2}{5} & \frac{1}{5} & -1\\
            \frac{7}{5} & 0 & -\frac{11}{5} \end{bmatrix},
$$
which has rank two with eigenvalue $-\frac{3}{2}$, $\frac{1}{5}$ and $0$. In this case, $B(C_1B)^{-1}C_1$ has only two independent eigenvalues, so we cannot apply Theorem~\ref{McMillanthm}. However, due to \eqref{eq:FforD=0}, the zero pole will nevertheless cancel, and we obtain
$$
    F(z)=  \frac{1}{5(2z+3)(5z-1)}\begin{bmatrix}
                    3(10z-9)(3z-1) \\ 10z^2 + 49z - 13 \\ (7 - 6z)(5z-1)
                    \end{bmatrix}
$$
which is unstable of degree two rather than three. The system
  \begin{displaymath}
F(z)=\tilde{C}(zI-\Gamma)^{-1}\tilde{B} +\tilde{D},
\end{displaymath}
with
  $$\Gamma = \begin{bmatrix} -\frac{3}{2} & 0\\ 0 & \frac{1}{5} \end{bmatrix},\;
        \tilde{B} = \begin{bmatrix} \frac{8}{5}\\ \frac{14}{425} \end{bmatrix},\;
      \tilde{C} = \begin{bmatrix} -\frac{99}{34} & 3\\ \frac{8}{17} & -1\\ 1 & 0 \end{bmatrix},\;
      \tilde{D} = \begin{bmatrix} \frac{9}{5} \\ \frac{1}{5} \\ -\frac{3}{5} \end{bmatrix}$$
is a minimal realization of $F(z)$.

\subsection{Example 2: $D$ full rank}
\subsubsection{Example 2.1}
Given the spectral density $\Phi(z)$, let
$$
\bar{W}(z)=\bar{C}(zI-A)^{-1}\bar{B}+\bar{D}
$$
be a spectral factor,
where
$$
     A = \begin{bmatrix}
        -\frac{1}{5} & 0 & 0\\
        \frac{1}{2} & \frac{9}{20} & \frac{1}{20}\\
        -\frac{13}{10} & \frac{1}{5} & \frac{3}{10}
        \end{bmatrix},\quad
   \bar{B} = \begin{bmatrix}
            \frac{1}{\sqrt{2}} & -\frac{1}{\sqrt{2}} \\
             \frac{1}{\sqrt{2}} & -\frac{3}{\sqrt{2}} \\
             \frac{1}{\sqrt{2}} & \frac{1}{\sqrt{2}}
        \end{bmatrix},\;
$$
$$
    \bar{C} = \begin{bmatrix}
       -3 & -1 & -1 \\
       -2\sqrt{2} & -\frac{5}{\sqrt{2}} & \frac{1}{\sqrt{2}} \\
       0 & \frac{1}{\sqrt{2}} & -\frac{7}{\sqrt{2}}
     \end{bmatrix}, \quad
     \bar{D}=\begin{bmatrix}
                     \frac{1}{\sqrt{2}} & \frac{1}{\sqrt{2}} \\
                     1 & -1 \\
                     -1 & 1
                   \end{bmatrix},
$$
with $\rho = m=2$ and  $\bar{D}$ full column rank.

Performing the transformations \eqref{D2Sigma} on this system with
$$
U=\begin{bmatrix}
    0 & \frac{1}{\sqrt{2}} & -\frac{1}{\sqrt{2}} \\
    1 & 0 & 0 \\
    0 & \frac{1}{\sqrt{2}} & \frac{1}{\sqrt{2}}
  \end{bmatrix},\quad
V'=
 \begin{bmatrix}
     \frac{1}{\sqrt{2}} & \frac{1}{\sqrt{2}} \\
     -\frac{1}{\sqrt{2}} & \frac{1}{\sqrt{2}}
   \end{bmatrix},
$$
 we have a new system with
$$
B=\begin{bmatrix}
    1 &  0 \\
    2 & -1 \\
    0 &  1
  \end{bmatrix},\,
            C=\begin{bmatrix}
                     -2 & -3 & 4 \\
                     -3 & -1 & -1 \\
                     -2 & -2 & -3
                   \end{bmatrix}, \,
                   D=\begin{bmatrix}
    2 &  0 \\
    0 & 1 \\
    0 &  0
  \end{bmatrix},
$$
and hence
$$
 \Sigma = \begin{bmatrix}
            2 & 0 \\
            0 & 1
          \end{bmatrix}.
$$

Then, we have $B=B_0$,
$$
C_0= \begin{bmatrix}
                     -2 & -3 & 4 \\
                     -3 & -1 & -1
                   \end{bmatrix},\quad
 C_2=\begin{bmatrix}
                     -2 & -2 & -3
                   \end{bmatrix},
 $$
and
$$
 \Gamma_0=A-B\Sigma^{-1}C_0 =
     \begin{bmatrix}
     \frac{4}{5} & \frac{3}{2} & -2 \\
     -\frac{1}{2} & \frac{49}{20} & -\frac{99}{20} \\
     \frac{17}{10} & \frac{6}{5} & \frac{13}{10}
     \end{bmatrix},
$$
which has full rank $3$. Hence \eqref{eq:FforDfullrank} is a minimal realization of $F(z)$, and
\begin{displaymath}
F(z)=\frac{-5}{\chi(z)} \begin{bmatrix}
    240z^2 + 56z - 67, & 4(20z^2 - 521z - 161)
    \end{bmatrix},
\end{displaymath}
where
\begin{displaymath}
\chi(z)=400z^3 - 1820z^2 + 6510z - 2073.
\end{displaymath}
By calculating the zeros of $\chi(z)$, we see thet $F(z)$ is unstable.

However, the following example shows that the McMillan degree of $F(z)$ may be strictly less than $n$ even when $D$ is full rank.

\subsubsection{Example 2.2}
Let $\Phi(z)$ be a spectral density with
$$
A=\begin{bmatrix}
    -\frac{1}{2} & 0 & 0 \\
    0 & \frac{7}{10} & 0 \\
    0 & 0 & \frac{1}{10}
  \end{bmatrix}, \quad B=\begin{bmatrix}
                     3 & -\frac{3}{2} \\
                     3 & -4 \\
                     1 & 1
                   \end{bmatrix},
$$
$$
C=\begin{bmatrix}
                                     1 & 1 & 0 \\
                                     1 & 1 & 0 \\
                                     0 & 0 & 1
                                   \end{bmatrix},\quad
D=\begin{bmatrix}
    1 & 0 \\
    0 & -1 \\
    0 & 0
  \end{bmatrix}.
$$
Here $D$ is already a tall diagonal matrix so Step 1) in Procedure~\ref{comp} is not needed, and thus
$$
 \Sigma=\begin{bmatrix}
         1 & 0 \\
        0 & -1
        \end{bmatrix}.
$$
Then
$$
    C_0=\begin{bmatrix}
          1 & 1 & 0 \\
          1 & 1 & 0
        \end{bmatrix}
$$
and
$$
\Gamma_0=A-B\Sigma^{-1}C_0 = \begin{bmatrix}
     -5 & -\frac{9}{2} & 0 \\
     -7 & -\frac{63}{10} & 0 \\
     0 & 0 & \frac{1}{10}
     \end{bmatrix},
$$
which has rank two. Hence $F(z)$ has a realization \eqref{eq:FforDfullrank} with  McMillan degree strictly less than $n=3$. In fact,
the observability and reachability matrices of this realization are
$$
\mathcal{O}=\begin{bmatrix}
              C_2 \\ C_2\Gamma_0 \\ C_2\Gamma_0^2
            \end{bmatrix}=\begin{bmatrix}
                            0 & 0 & 1 \\
                            0 & 0 & 1/10 \\
                            0 & 0 & 1/100
                          \end{bmatrix},
$$
\begin{equation}\nonumber
\begin{split}
\mathcal{R} &=\begin{bmatrix}
              B\Sigma^{-1} & \Gamma_0B\Sigma^{-1} & \Gamma_0^2B\Sigma^{-1}
            \end{bmatrix} \\
            &=\begin{bmatrix}
                            3 & \frac{3}{2} & -\frac{57}{2} & -\frac{51}{2} &  \frac{6441}{20} &  \frac{5763}{20} \\
                            3 & 4 & -\frac{399}{10}  & -\frac{357}{10} & \frac{10370}{23} & \frac{15733}{39}  \\
                            1 & -1 &   \frac{1}{10}  & -\frac{1}{10} & \frac{1}{100} & -\frac{1}{100}
        \end{bmatrix},
\end{split}
\end{equation}
so $\rank(\mathcal{O}\mathcal{R})=1$ and $F(z)$ has a minimal realization of dimension $1$, namely
\begin{displaymath}
F(z)=C_2(zI-\Gamma_0)^{-1}B\Sigma^{-1}=\frac{1}{z-1/10}\begin{bmatrix}1&-1\end{bmatrix}
\end{displaymath}
which is stable.

\subsection{Example 3: mixed case}
Let $\bar{\Phi}(z)$ be a spectral density with spectral factor
$$
\bar{W}(z)=\bar{C}(zI-A)^{-1}\bar{B}+\bar{D} ,
$$
where
$$
     A = \begin{bmatrix}
        -\frac{1}{2} & \frac{4}{5} & \frac{26}{5}\\
        0 & \frac{7}{5} & \frac{18}{5}\\
        0 & -\frac{3}{10} & -\frac{7}{10}
        \end{bmatrix},\quad
    \bar{B} = \begin{bmatrix}
            \frac{3}{\sqrt{2}} & 2 & -\frac{3}{\sqrt{2}} \\
             -2{\sqrt{2}} & 1& 0\\
             \frac{5}{\sqrt{2}} & 0 & -\frac{1}{\sqrt{2}}
        \end{bmatrix},\;
$$
$$
    \bar{C} = \begin{bmatrix}
       -1 & 4 & -1 \\
       1 & 3 & -2 \\
       3 & 0 & 0\\
       -4 & -1 & 0
     \end{bmatrix}, \quad
    \bar{D}=\begin{bmatrix}
     0 & -1 & 0 \\
                     -\frac{3}{\sqrt{2}} & 0& -\frac{3}{\sqrt{2}} \\
                     0 & 0 & 0 \\
                     0 & 0 & 0
                   \end{bmatrix},
$$
with $p=4$, $m= n=3$, $\rank(\bar{D})=\rho=2$.

Performing the transformations \eqref{D2Sigma} with
\begin{displaymath}
U=\begin{bmatrix}
    1 & 0 & 0 & 0 \\
    0 & 1 & 0 & 0 \\
    0 & 0 & 1 & 0 \\
    0 & 0 & 0 & 1
  \end{bmatrix}, ~ V'=\begin{bmatrix}
    0 & \frac{1}{\sqrt{2}} & -\frac{1}{\sqrt{2}} \\
    1 & 0 & 0 \\
    0 & \frac{1}{\sqrt{2}} & \frac{1}{\sqrt{2}}
  \end{bmatrix},
\end{displaymath}
we have the new system with
$$
B=\begin{bmatrix}
    2 &  0 & -3 \\
    1 & -2 & 2\\
    0 &  2 & -3
  \end{bmatrix}, \, C=\begin{bmatrix}
       -1 & 4 & -1 \\
       1 & 3 & -2 \\
       3 & 0 & 0\\
       -4 & -1 & 0
                   \end{bmatrix},
$$
 and
 $$
 D=\begin{bmatrix}
    -1 &  0 &  0\\
    0 & -3 &  0\\
    0 &  0 &  0 \\
     0 &  0 &  0
  \end{bmatrix}.
$$
Consequently
$$
 \Sigma=  \begin{bmatrix}
            -1 & 0 \\
            0 & -3
          \end{bmatrix}
$$
and
$$
  B_0= \begin{bmatrix}
    2 & 0 \\
    1 & -2 \\
    0 & 2
    \end{bmatrix},~ B_1=\begin{bmatrix}
                         -3 \\
                         2 \\
                         -3
                       \end{bmatrix},~  C_0=\begin{bmatrix}
       -1 & 4 & -1 \\
       1 & 3 & -2
       \end{bmatrix}.
$$

Choose $C_1$ to be the first row among the last $p-\rho=2$ rows of $C$, i.e., the third row of $C$,
$$
  C_1= \begin{bmatrix}
         3 & 0 & 0
       \end{bmatrix}
$$
with $C_1B_1=-9$ full rank.
Then by \eqref{eq:Gamma0} and \eqref{eq:Gamma1},
$$
\Gamma_0=A-B_0\Sigma^{-1}C_0=
    \begin{bmatrix}
    -\frac{5}{2} & \frac{44}{5} & \frac{16}{5} \\
    -\frac{5}{3} & \frac{17}{5} & \frac{59}{15} \\
    \frac{2}{3} & \frac{17}{10} & -\frac{61}{30}
    \end{bmatrix},
$$
$$
\Gamma_1= \Gamma_0-B_1(C_1B_1)^{-1}C_1\Gamma_0=
    \begin{bmatrix}
    0 & 0 & 0 \\
    -\frac{10}{3} & \frac{139}{15} & \frac{91}{15} \\
    \frac{19}{6} & -\frac{71}{10} & -\frac{157}{30}
    \end{bmatrix},
$$
with rank $3$ and $2$, respectively. Therefore we have $F(z)= \begin{bmatrix} F_0(z) & F_1(z)\end{bmatrix}$ with
\begin{align*}
 F_0 &= \frac{1}{90z^2-363z-488}\begin{bmatrix}
                                7(30z+1) & 10(5-6z)
                              \end{bmatrix},\\
 F_1 &=\frac{- 300z^2 + 1520z + 1793}{3(90z^2-363z-488)}.
\end{align*}
%{\color{red} Is it unstable?}
 Hence $F(z)$ is unstable with two poles, namely $(121-\sqrt{34161})/60$ and $(121+\sqrt{34161})/60$.

\subsection{Example 4: determining $H(z)$ in the matrix case} \label{Example4}
In this example, we first use coprime factorization (see, e.g., \cite{Francis}) to obtain the interpolation conditions as in \cite[p. 2174]{BLN}, and then solve the Nevanlinna-Pick interpolation problem by the approach in \cite{CLtac}.

Suppose that Theorem~\ref{thm:F(z)} has given us
$$
     F(z)=\begin{bmatrix}
                        \frac{z+3}{z+2} & 0 \\
                        0 & \frac{z-4}{z-2}
                      \end{bmatrix}.
$$
Then, by the bilinear Tustin transformation, we have the corresponding function
\begin{equation}
\label{G(s)}
    G(s)=F\left(\frac{1+s}{1-s}\right)=\begin{bmatrix}
                                            \frac{2s-4}{s-3} & 0 \\
                                            0 & \frac{5s-3}{3s-1}
                                          \end{bmatrix}
\end{equation}
 in the $s$-domain.
Suppose we want to find a function $K(s)$ so that the sensitive function
\begin{equation}
    S(s)=(I-GK)^{-1}
\end{equation}
is stable. Then the discrete-time sensitivity function $Q(z)$ can be obtained by performing the Tustin transformation. If one wants to use the input sensitive function $P(z)$ instead, $S(s)=(I-KG)^{-1}$ should be used instead.

By \cite[Lemma1, p. 23]{Francis},
\begin{equation}
\label{G}
G(s)= N(s)M(s)^{-1}=\tilde{M}(s)^{-1}\tilde{N}(s)
\end{equation}
with
\begin{subequations}
\begin{equation}
\label{coprime1}
\tilde{X}M-\tilde{Y}N=I
\end{equation}
\begin{equation}
\label{coprime2}
\tilde{M}X-\tilde{N}Y=I,
\end{equation}
\end{subequations}
where \eqref{coprime1} is the condition for $M$ and $N$ to be right coprime and \eqref{coprime2} is the condition for $\tilde{M}$ and $\tilde{N}$ to be left coprime. By \cite[Theorem 1, p. 38]{Francis}, the internally stable controllers are given by
\begin{equation}
%\label{K}
  K =(Y-ML)(X-NL)^{-1},  \label{K1}
\end{equation}
where $L\in RH_\infty$ (i.e., the space stable proper rational matrix function) is arbitrary. This is a classical parametrization that however puts no limit on the degree of $K$, and uniqueness is not established. Choosing an $L$, we may directly obtain a sensitivity function $S(s)$.

From the above, we may write $S$ as
\begin{equation}
    S(s)=T_1(s)-T_2(s)L(s)T_3(s),
\end{equation}
where
\begin{equation}
    \label{T123}
    T_1=X\tilde{M}, \quad T_2=N,\quad T_3=\tilde{M}.
\end{equation}
Hence the (transmission) zeros of $T_2$ and $T_3$ are respectively the zeros and poles of $G(s)$. Performing inner-outer factorizations of $T_2$ and $T_3$, we obtain $T_2=\Theta_2\tilde{T_2}$ and $T_3=\tilde{T_3}\Theta_3$, where $\Theta_2$ and $\Theta_3$ are inner functions containing the unstable poles and nonminimum-phase zeros of $G$. Denote $\phi:=\det \Theta_2 \det \Theta_3$; then our interpolation points are the zeros of $\phi$ with the same multiplicities.

Now proceeding along the lines of  \cite[pages 23-25]{Francis},  we determine $X$, $\tilde{M}$ and $N$ from \eqref{G(s)}. Then inner-outer factorization of the corresponding matrices \eqref{T123} yields
%(Note that a proper real-rational function has inner-outer factorization if and only if it has no (transmission) zeros on the imaginary axis including $\infty$. Hence the approach to obtain interpolation conditions in paper \cite{BLN} is not appropriate for some cases.)

%{\color{red}
%We choose the matrices  %F=-2I, H=[-4,3;0,5]
%$$
%A_F=\begin{bmatrix}
%      -1 & 0 \\
%      0 & -2/3
%    \end{bmatrix},
%$$
%$$
%A_H=\begin{bmatrix}
%      -1 & -8/3 \\
%      0 & -37/9
%    \end{bmatrix}.
%$$
%in \cite[Lemma1, p. 23]{Francis} to be the state matrix of one minimal state space realization of $G(s)$, }
%We use normalized coprime factorization for $G(s)$,
%and by calculation we have
\begin{eqnarray}
     \Theta_2=\begin{bmatrix}
                  \frac{s-2}{s+2} & 0 \\
                  0 & \frac{5s-3}{5s+3}
                \end{bmatrix}, \quad
    \Theta_3=\begin{bmatrix}
                                          \frac{s-3}{s+3} & 0 \\
                                          0 & \frac{3s-1}{3s+1}
                                        \end{bmatrix}.
    %&& T_1=\begin{bmatrix}
     %                       \frac{ 0.83205(s-0.3333)(s+4.276)}{ (s+0.3922)^2 } & 0 \\
      %                      0 & \frac{ 0.70711 (s-7.662) (s-2) }{(s+1.581)^2}
       %                   \end{bmatrix}.
\end{eqnarray}
\begin{equation}
    \phi=\frac{(3s-1)(5s-3)(s-2)(s-3)}{(3s+1)(5s+3)(s+2)(s+3)}
\end{equation}
with interpolation points
$$
s_0=0.3333,\quad s_1=0.6,\quad s_2=2,\quad s_3=3,
$$
all of multiplicity $1$.
Define $\tilde{S}:=\phi\Theta_2^*S\Theta_3^*$ and
\begin{equation}
\begin{split}
    \tilde{T_1}&:= \phi\Theta_2^*T_1\Theta_3^*\\
    &= (s-0.3333)(s-3)\\
    &\begin{bmatrix}
    \frac{(s-0.6)(s-11)}{(s+0.3333)(s+0.6)(s+1)^2} & \frac{12(s-0.6)(s-2)}{(s+0.6)(s+1)^2(s+3)(s+4.1111)} \\
     0 & \frac{(s-2)(s+21.7778)}{(s+0.6667)(s+2)(s+3)(s+4.1111)}
      \end{bmatrix}.
\end{split}
\end{equation}
Then we have the  interpolation conditions $\tilde{S}(s_k)=\tilde{T_1}(s_k)$ for $k=0,1,2,3$ or more specifically,
\begin{equation}
\begin{split}
    \tilde{S}(s_0)=\begin{bmatrix}
                     0 & 0 \\
                     0 & 0
                   \end{bmatrix} ,~ \tilde{S}(s_1)=\begin{bmatrix}
                      0 & 0 \\
                     0 & 0.3590
                   \end{bmatrix},\\
                    \tilde{S}(s_2)=\begin{bmatrix}
                      0.3846 & 0 \\
                     0 & 0
                   \end{bmatrix},~\tilde{S}(s_3)=\begin{bmatrix}
                     0 & 0 \\
                     0 & 0
                   \end{bmatrix}.
\end{split}
\end{equation}

Denote
\begin{eqnarray}
% \nonumber % Remove numbering (before each equation)
    f(\xi):=\gamma^{-1}\tilde{S}\left(\frac{1-\xi}{1+\xi}\right)
\end{eqnarray}
with $\xi=\frac{1-s}{1+s}$. Then $f(\xi)$ is analytic outside the unit circle and hence stable. With
\begin{equation}
    z=\frac{\xi -\xi_0}{1-\xi_0\xi},
\end{equation}
we have the Carth{\'e}odory function
\begin{equation}
    \varphi(z):=\frac{1}{2}{(I-f(\xi))^{-1}}{(I+f(\xi))}
\end{equation}
with interpolation conditions
\begin{subequations}
\begin{equation}
   z_0=0,~z_1=-0.2857,~z_2= -0.7143 ,~z_3=-0.8000,
\end{equation}
\begin{equation}
\begin{split}
    \varphi(z_0)&=\tfrac{1}{2}I,~~\varphi(z_1)=\tfrac{1}{2}\begin{bmatrix}
                     1 & 0 \\
                     0 & \frac{\gamma+0.3590}{\gamma-0.3590}
                   \end{bmatrix},\\
    \varphi(z_2)&=\tfrac{1}{2}\begin{bmatrix}
                     \frac{\gamma+0.3846}{\gamma-0.3846} & 0 \\
                     0 & 1
                   \end{bmatrix},~~\varphi(z_3)=\tfrac{1}{2}I.
\end{split}
\end{equation}
\end{subequations}
where
\begin{displaymath}
R(z)=B_1(C_1B_1)^{-1}C_1z(zI-\Gamma_0)^{-1}
\end{displaymath}
However, $z(zI-\Gamma_0)^{-1}= \Gamma_0(zI-\Gamma_0)^{-1} +I$, as in \eqref{eq:decomp},
so
 \begin{displaymath}
R(z)=B_1(C_1B_1)^{-1}C_1\Gamma_0(zI-\Gamma_0)^{-1} +B_1(C_1B_1)^{-1}C_1,
\end{displaymath}

Next we choose $\gamma=10$ and apply the interpolation approach in \cite{CLtac}. Analogously with the scalar case, there is a complete parameterization of all solutions with degree constraint, and here we choose the central solution, which takes the form
\begin{equation}
    \varphi(z)=\begin{bmatrix}
                 \varphi_{11}(z) & 0 \\
                 0 & \varphi_{22}(z)
               \end{bmatrix},
\end{equation}
where
\begin{subequations}
\begin{equation}
    \varphi_{11}=-\frac{0.5(z+1.1840)(z^2+0.0842z+1.0581)}{(z-1.0121)(z^2+1.9154z+1.2379)},
\end{equation}
\begin{equation}
    \varphi_{22}=-\frac{0.5(z-1.0486)(z^2+2.2808z+1.9319)}{(z+1.5899)(z^2+0.2064z+1.2743)}.
\end{equation}
\end{subequations}
Then we go back from $\varphi(z)$ to $f(\xi)$, and then to $S(s)$. Finally we have
\begin{equation}
    Q(z)=S\left(\frac{z-1}{z+1}\right)=\begin{bmatrix}
                                              Q_{11} & 0 \\
                                              0 & Q_{22}
                                            \end{bmatrix}.
\end{equation}
where
\begin{subequations}
\begin{equation}
    Q_{11}=\frac{3.0838(z-0.25)(z+2)}{z^2-0.3283z+0.0372},
\end{equation}
\begin{equation}
    Q_{22}=\frac{1.2199(z-2)(z+0.3333)(z+0.5)}{(z-0.5)(z^2-0.6446z+0.1727)},
\end{equation}
\end{subequations}
so $Q(z)$ is stable, as required. Moreover
\begin{equation}
    H(z)=F(z)^{-1}(I-Q(z)^{-1})=\begin{bmatrix}
                                  H_{11}(z) & 0 \\
                                  0 & H_{22}(z)
                                \end{bmatrix},
\end{equation}
\begin{subequations}
where
\begin{equation}
    H_{11}=\frac{0.6757(z-0.2526)}{z+0.25},
\end{equation}
\begin{equation}
    H_{22}=\frac{0.1803(z+0.1404)(z+2.5933)}{(z+0.5)(z+0.3333)},
\end{equation}
\end{subequations}
which is also stable.

\section{Conclusion}\label{sec:conclusion}

This paper has been devoted to modeling of rank-deficient stationary vector processes, present in singular dynamic network models, dynamic factor models, etc. Any such process can be  rearranged in two  components   as in (1) to obtain two vector processes, a full rank component $u(t)$  and a residual process $y(t)$. It is shown that these components can be described by a special feedback representation  \eqref{FuHyv=0} illustrated in Figure \ref{fig_specialFB},  where $v(t)=0$, thus providing a deterministic relation between $u(t)$ and $y(t)$. In this model  the forward transfer function $F(z)$ is uniquely defined and given by  formula \eqref{Fred}. However, different choices of $u(t)$ and $y(t)$ give different $F(z)$. In general $F(z)$ is not stable, so since all processes are stationary, the complete feedback configuration needs to be internally stable, leading us to robust control theory to determine $H(z)$ and thus stabilizing the feedback loop.

\appendix
\subsection{Feedback models with uncorrelated inputs}\label{Uncorrvr}
Assume initially that the spectral density $\Phi(z)$ is full rank (invertible a.e.). Then it admits square spectral factors $W(z)$ which provide a representation of the joint process $\zeta$ as in \eqref{WNrepr} where $w(t)$ is a normalized Wiener process which can be partitioned conformably with $\zeta$ as
$$
w(t)=\bmat w_1(t)\\ w_2(t)\emat \,.
$$
Each such partitioning induces a corresponding partitioning of $W(z)$ in four blocks like that shown in the left side of equation \eqref{Wdecomp1} where an equivalent  representation in terms of the four transfer functions $F,H,G,K$ of the feedback configuration is shown. It is immediate to check that for  a full rank spectrum this representation is 1:1 and one can solve for $F,H,G,K$ in terms of $(W_{1,1},W_{1,2},W_{2,1},W_{2,2})$ uniquely, yielding a feedback representation in terms of $r(t):= K(z)w_1(t)$ and $v(t):= G(z)w_2(t)$ which are obviously uncorrelated.  Note  that in the rank-deficient case assuming $\dim w_1(t)=m$ and hence $\dim w_2(t)=0$ the first of the two relations \eqref{eq:W2FH} and that for the $m\times m$ noise matrix  $K(z)$ still hold true  while that for $H(z)$  in the rank-deficient case must involve a generalized inverse  and hence implies  non-uniqueness.  Of course in this case  $G(z)$ is chosen equal to zero.

\subsection{Proof of Theorem \ref{MainThm}}\label{appendixThm_2}
\begin{proof}
The claim is equivalent to the following two statements:\\
1. (Sufficiency) If we have the particular feedback structure \eqref{FuHyv=0} i.e. $\Phi_v \equiv 0$;  then $u$ is   of full rank $m= \rank (\Phi)$.\\
2. (Necessity) Conversely, if $u$ is   of full rank $m= \rank (\Phi)$ then $\Phi_v \equiv 0$.

Part 1 follows  from \eqref{Phivr} since   then $\Phi_r$ must have rank $m = \rank (\Phi)$.\\
Part 2 is not so immediate. It  is proved as follows.

Since 	$\Phi(z)$ has rank $m$ a.e. there must be a full rank $(p-m)\times p$ rational matrix which we write in partitioned form, such that
\begin{align*}
\![A(z)~B(z)] \Phi(z) &=0\, \!\Leftrightarrow \\
 [A(z)~B(z)] & \, \E \{\bmat\hat u(z)\\ \hat y(z)\emat \bmat \hat u(z)^* & \hat y(z)^*\emat \} =0  \\
&\, \Leftrightarrow \, [A(z)~B(z)] \bmat u(t) \\ y(t)\emat \perp \Hb(u,y)
\end{align*}
where $A$, $B$ are $(p-m)\times m$, $(p-m)\times (p-m)$ matrices and hats denote Fourier transforms \cite{LPbook}. The last formula implies that
\begin{equation}\label{AB}
 A(z)u(t) + B(z) y(t) \equiv 0
\end{equation}
 identically, since the term on the left  is a  member of the Hilbert space $\Hb(u,y) $.
We claim that $B(z)$ must be of full rank $p-m$. One can prove this using the invertibility of $\Phi_{u}(z)$ as follows.
For, suppose $B(z)$ is singular, then  pick a $p-m$-dimensional non-zero row vector $a(z)$ in the left null space of $B(z)$ and multiply from the left the last  relation by  $a(z)$. This would imply that also $a(z)A(z)\Phi_{u}(z)=0$ which in turn implies $a(z)A(z)=0$ since $\Phi_{u}$ is full rank. However $a(z)[A(z)~B(z)]$ cannot be zero for the  matrix $[A(z)~B(z)]$ is full rank $p-m$ and hence $a(z)$ must be zero. So $B(z)$ must be full rank.\\
Now  take the nonsingular $(p-m)\times (p-m)$ rational matrix $M(z)=B(z)^{-1}$ and consider instead $M(z)[A(z)~B(z)]$, which  provides an equivalent relation to \eqref{AB}. By this
 we can reduce $B(z)$ to the identity getting a relation of the type
$$
[ -F(z)~\, I\,] \bmat  u(t) \\ y(t)\emat =0
$$
where  $F(z)$ is a rational matrix function, so that    one gets the deterministic dynamical relation
$$
  y(t)= F(z) u(t)\,.
$$
Substituting from  the general feedback model the relation $u(t)=H(z)y(t)+ r$ one concludes  that $y(t)= Q(z) r(t)$ must thens be a functional of only the noise $r$ and likewise $u(t)$. Therefore by \eqref{Phivr} and the uncorrelation of $v$ and $r$ one must conclude that  $\Phi_v=0$, i.e. $v$ must be  the zero process. Hence a representation like \eqref{FuHyv=0}  must hold.
\end{proof}

\subsection{Some lemmas}\label{appendixlemmas}

\begin{lemma}\label{inverse}
Let $G(z)$ be the proper rational transfer function
\begin{displaymath}
G(z)=C(zI-A)^{-1}B +D
\end{displaymath}
where $D$ is square and nonsingular. Then
\begin{displaymath}
G(z)^{-1}=D^{-1} - D^{-1}C\left[zI-(A-BD^{-1}C)\right]^{-1}BD^{-1}.
\end{displaymath}
\end{lemma}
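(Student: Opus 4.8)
The plan is to prove the identity by direct verification, exhibiting the right-hand side as a genuine two-sided inverse. Set $A_c := A - BD^{-1}C$ and let $\tilde G(z) := D^{-1}-D^{-1}C(zI-A_c)^{-1}BD^{-1}$ be the candidate inverse, which is a proper rational matrix defined wherever $zI-A_c$ is nonsingular, i.e.\ for all but finitely many $z$. Since $G(z)$ and $\tilde G(z)$ are square of equal size, it suffices to check $G(z)\tilde G(z)=I$ as an identity of rational matrix functions; it then holds at every $z$ where both factors are defined, and a one-sided inverse of a square rational matrix is automatically two-sided.

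First I would expand
\[
G(z)\tilde G(z)=\bigl(C(zI-A)^{-1}B+D\bigr)\bigl(D^{-1}-D^{-1}C(zI-A_c)^{-1}BD^{-1}\bigr),
\]
which yields $I$ together with three terms, each of the form $C[\,\cdot\,]BD^{-1}$ for a rational matrix in place of $[\,\cdot\,]$. Collecting the bracketed factors, the claim reduces to the single identity
\[
(zI-A)^{-1}-(zI-A)^{-1}\,BD^{-1}C\,(zI-A_c)^{-1}-(zI-A_c)^{-1}=0 .
\]
The one observation that makes everything collapse is that $BD^{-1}C=A-A_c=(zI-A_c)-(zI-A)$, so the middle term telescopes via the resolvent identity:
\[
(zI-A)^{-1}\bigl[(zI-A_c)-(zI-A)\bigr](zI-A_c)^{-1}=(zI-A)^{-1}-(zI-A_c)^{-1}.
\]
Substituting this back leaves $(zI-A)^{-1}-\bigl[(zI-A)^{-1}-(zI-A_c)^{-1}\bigr]-(zI-A_c)^{-1}=0$, so $G(z)\tilde G(z)=I$ and hence $\tilde G(z)=G(z)^{-1}$.

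There is no real obstacle here beyond bookkeeping the four product terms; the entire content is the telescoping step, i.e.\ the resolvent identity for the pair $(A,A_c)$, and no assumption on $A$ is needed beyond invertibility of the relevant resolvents. A slightly more conceptual alternative, which I would mention as a remark, is to introduce the block matrix $M(z)=\begin{bmatrix} zI-A & B\\ -C & D\end{bmatrix}$: its Schur complement with respect to the $(1,1)$ block is exactly $G(z)=D+C(zI-A)^{-1}B$, while its Schur complement with respect to the $(2,2)$ block $D$ is $zI-A+BD^{-1}C=zI-A_c$. Writing the $(2,2)$ entry of $M(z)^{-1}$ in the two standard Schur-complement forms then gives $G(z)^{-1}$ on one side and $D^{-1}-D^{-1}C(zI-A_c)^{-1}BD^{-1}$ on the other, which is the asserted formula.
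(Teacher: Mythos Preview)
Your proof is correct, but it takes a genuinely different route from the paper's. The paper argues system-theoretically: it regards $G(z)$ as the transfer function of the state-space system $x(t+1)=Ax(t)+Bu(t)$, $y(t)=Cx(t)+Du(t)$, solves the output equation for $u(t)=D^{-1}[y(t)-Cx(t)]$, substitutes into the state equation, and reads off the inverse system $x(t+1)=(A-BD^{-1}C)x(t)+BD^{-1}y(t)$, $u(t)=-D^{-1}Cx(t)+D^{-1}y(t)$, whose transfer function is the claimed formula. Your argument instead verifies $G(z)\tilde G(z)=I$ directly via the resolvent identity $(zI-A)^{-1}BD^{-1}C(zI-A_c)^{-1}=(zI-A)^{-1}-(zI-A_c)^{-1}$, and your Schur-complement remark is a third, equally valid, route. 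The paper's derivation is shorter and, from a control viewpoint, more conceptual---it also immediately exhibits $(A-BD^{-1}C,\,BD^{-1},\,-D^{-1}C,\,D^{-1})$ as a state-space realization of $G(z)^{-1}$, which is the form in which the lemma is actually used later (e.g.\ in deriving $W_{00}(z)^{-1}$). Your computation is a bit longer but is purely algebraic and makes no appeal to the system interpretation, which some readers may prefer.
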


\medskip

\begin{proof}
The rational matrix function $G(z)$ is the transfer function of the control system
\begin{align*}
 x(t+1)   &=Ax(t)+Bu(t)   \\
  y(t)  &  =Cx(t)+Du(t)
\end{align*}
Inserting $u(t)=D^{-1}\left[y(t)-Cx(t)\right]$ in the first equation yields the inverse system
\begin{align*}
  x(t+1)  & =(A-BD^{-1}C)x(t)+BD^{-1}y(t)  \\
  u(t)  &  =-D^{-1}Cx(t) + D^{-1}y(t)
\end{align*}
with transfer function $G(z)^{-1}$.
\end{proof}

\medskip

\begin{lemma}\label{similaritylem}
    If $A\in \mathbb{R}^{m\times n}$, $B\in \mathbb{R}^{n\times m}$, then the nonzero eigenvalues of $AB$ and $BA$ are the same.
\end{lemma}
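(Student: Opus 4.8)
The plan is to argue directly at the level of eigenvectors, exploiting the symmetry of the statement under swapping $A$ and $B$. Let $\lambda\neq 0$ be an eigenvalue of $AB$ (over $\mathbb{C}$, since a real matrix may have complex eigenvalues), and let $v\neq 0$ be a corresponding eigenvector, so that $ABv=\lambda v$. First I would observe that $Bv\neq 0$: indeed, if $Bv=0$ then $\lambda v = A(Bv)=0$, which is impossible because $\lambda\neq 0$ and $v\neq 0$. Next I would simply compute
\begin{displaymath}
(BA)(Bv)=B(ABv)=B(\lambda v)=\lambda(Bv),
\end{displaymath}
which exhibits $\lambda$ as an eigenvalue of $BA$ with eigenvector $Bv$.

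This shows every nonzero eigenvalue of $AB$ is an eigenvalue of $BA$. The reverse inclusion follows by interchanging the roles of $A$ and $B$ (note $A\in\mathbb{R}^{m\times n}$ and $B\in\mathbb{R}^{n\times m}$ play symmetric roles, so the same argument applies to $BA$ in place of $AB$). Hence the set of nonzero eigenvalues of $AB$ equals that of $BA$, which is the claim.

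There is essentially no obstacle here; the only point requiring a moment's care is the verification that $Bv\neq 0$, which is exactly where the hypothesis $\lambda\neq 0$ is used — without it the zero eigenvalue can behave differently (e.g. $AB$ and $BA$ have different sizes $m\times m$ and $n\times n$, so their algebraic multiplicities of $0$ differ when $m\neq n$). If one additionally wanted equality of algebraic multiplicities of the nonzero eigenvalues, I would instead invoke the Sylvester determinant identity $\det(\mu I_m - AB)=\mu^{\,m-n}\det(\mu I_n - BA)$ (provable via similarity of $\left[\begin{smallmatrix}AB&0\\B&0\end{smallmatrix}\right]$ and $\left[\begin{smallmatrix}0&0\\B&BA\end{smallmatrix}\right]$ through $\left[\begin{smallmatrix}I&A\\0&I\end{smallmatrix}\right]$), but for the stated lemma the eigenvector argument above is all that is needed.
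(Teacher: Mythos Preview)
Your eigenvector argument is correct and complete for the statement as written. The paper, however, proceeds differently: it shows that the block matrices
\[
T_1=\begin{bmatrix}AB & 0\\ B & 0\end{bmatrix}
\quad\text{and}\quad
T_2=\begin{bmatrix}0 & 0\\ B & BA\end{bmatrix}
\]
are similar via the unipotent matrix $S=\left[\begin{smallmatrix}I_m & A\\ 0 & I_n\end{smallmatrix}\right]$, and then reads off the characteristic-polynomial identity $\lambda^n\det(\lambda I_m-AB)=\lambda^m\det(\lambda I_n-BA)$ --- exactly the Sylvester-type argument you mention only as an optional extra at the end. Your route is shorter and more elementary for the bare set equality of nonzero eigenvalues; the paper's route yields equality of algebraic multiplicities for free, and this stronger conclusion is in fact what the paper relies on when it applies the lemma (to count that $B_1(C_1B_1)^{-1}C_1$ has exactly $m-\rho$ nonzero eigenvalues, all equal to $1$). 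So both proofs are valid for the lemma as stated, but if you want your version to serve the downstream application without appeal to the determinant identity, you would need to add the observation that $v\mapsto Bv$ is injective on each nonzero eigenspace (which your $Bv\neq 0$ argument already essentially gives) and combine it with the diagonalizability hypothesis used there.
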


     \medskip

\begin{proof}
The two matrices
\begin{displaymath}
T_1:=\begin{bmatrix}AB & 0\\ B & 0\end{bmatrix}\quad\text{and}\quad
T_2:=\begin{bmatrix}0 & 0\\ B & BA\end{bmatrix}
\end{displaymath}
 are similar. In fact,  $S^{-1}T_1S =T_2$,
where  $S$ is the $(m+n)\times (m+n)$ matrix
\begin{displaymath}
S:= \begin{bmatrix} I_m & A\\ 0 & I_n \end{bmatrix}.
\end{displaymath}
Therefore $T_1$ and $T_2$ have the same characteristic polynomials, i.e.,
\begin{displaymath}
\det(\lambda I_m - AB)\lambda^n=\lambda^m\det(\lambda I_n - BA).
\end{displaymath}
Therefore each nonzero eigenvalue $\tilde\lambda$ of $AB$ is also an eigenvalue of $BA$ and vice versa.
 \end{proof}

 \subsection{Determining $H(z)$ in the scalar case}\label{appendixH}
Given a scalar transfer function $F(z)$ with an unstable pole in $z=\xi_0^{-1}$ and a nonminimum-phase zero in $z=\xi_1^{-1}$, we want to determine a scalar $H(z)$ so that the feedback loop in Figure~\ref{fbdtfig} is internally stable. As explained in Section~\ref{sec:H}, this amounts to determining a Carath{\'e}odory function $\varphi$ which satisfies the interpolation conditions $\varphi(0)=\tfrac12$ and $\varphi(z_1)=w_1$, where $z_1=(\xi_1-\xi_0)(1-\xi_0\xi_1)^{-1}$ and $w_1=\tfrac12(\gamma-1)(\gamma+1)^{-1}$.

 To this end, we shall use the Riccati-type approach of \cite{CLtac}. In the problem formulation in the introduction of that paper $m=1$, $n_0=n_1=1$ and $n=1$, and from Section III-B in the same paper, we have the matrices
 \begin{displaymath}
W=\begin{bmatrix}\tfrac12&0\\0&w_1\end{bmatrix}, \; Z=\begin{bmatrix}0&0\\0&z_1\end{bmatrix}, \; e=\begin{bmatrix}1\\1\end{bmatrix}, \;
V=\begin{bmatrix}1&0\\1&z_1\end{bmatrix},
\end{displaymath}
which yields
\begin{displaymath}
T=\begin{bmatrix}0&0\\0&-\gamma^{-1}\end{bmatrix}
\end{displaymath}
and thus, continuing along the lines of \cite[Section III-B]{CLtac},  we have the interpolation data
\begin{equation}
\label{uU}
u=-\gamma^{-1}z_1^{-1}, \quad U=-\gamma^{-1}.
\end{equation}
With $\sigma$ an arbitrary parameter in the interval $(-1,1)$, the  Riccati-type equation then becomes
\begin{equation}
\label{CEE}
p=\sigma^2(p-p^2)+(u+U\sigma-U\sigma p)^2,
\end{equation}
which, by \cite[Theorem 9]{CLtac}, has a unique solution $0<p<1$. Then the corresponding solution of the interpolation problem is
\begin{displaymath}
\varphi(z)=\tfrac12\frac{1+bz}{1+az},
\end{displaymath}
where
\begin{displaymath}
\begin{split}
a&=(1-U)\sigma (1-p)-u\\
b&=(1+U)\sigma (1-p)+u.
\end{split}
\end{displaymath}
The central solution is obtained by setting $\sigma=0$, yielding
\begin{equation}
\label{phicentral}
\varphi(z)=\tfrac12\frac{1+uz}{1-uz}.
\end{equation}
To obtain a general solution for a nonzero $\sigma$ we need to solve the nonlinear equation \eqref{CEE}, which can be done by the  homotopy continuation method in subsection III-E of \cite{CLtac}.
The parameter $\gamma$ has to be chosen so that the Pick condition in \cite[Proposition 3]{CLtac} is satisfied.

\subsection{Proof of Theorem~\ref{thm:ufullrank}}\label{appendixthm3}

We need to show that the $m\times m$ spectral factor
\begin{displaymath}
\begin{bmatrix}
W_{00}(z)&W_{01}(z)\\W_{10}(z)&W_{11}(z)
\end{bmatrix}
\end{displaymath}
is full rank or, equivalently, that the Schur complement
\begin{equation}
\label{eq:Schur}
S(z)=W_{11}(z)- W_{10}(z)W_{00}(z)^{-1}W_{01}(z)
\end{equation}
is full rank. To this end, we first form
\begin{equation}
\label{eq:W10W00inv}
W_{10}(z)W_{00}(z)^{-1}= C_1(zI-A)^{-1}Q(z)B_0\Sigma^{-1},
\end{equation}
where
\begin{align*}
Q(z)&=I-B_0\Sigma^{-1}C_0(zI-\Gamma_0)^{-1} \\
    & = \left[zI-\Gamma_0-B_0\Sigma^{-1} C_0\right](zI-\Gamma_0)^{-1} \\
    &=(zI-A)(zI-\Gamma_0)^{-1},
\end{align*}
which inserted into \eqref{eq:W10W00inv} yields
\begin{equation}
\label{eq:W10W00inv2}
W_{10}(z)W_{00}(z)^{-1}=C_1(zI-\Gamma_0)^{-1}B_0\Sigma^{-1},
\end{equation}
and hence
\begin{align*}
&W_{10}(z)W_{00}(z)^{-1}W_{01}\\&= C_1(zI-\Gamma_0)^{-1}B_0\Sigma^{-1} C_0(zI-A)^{-1}B_1\\
&=C_1(zI-A)^{-1}B_1 - C_1(zI-\Gamma_0)^{-1}B_1\\
&=W_{11}(z) - C_1(zI-\Gamma_0)^{-1}B_1,
\end{align*}
where we have used the fact that
\begin{displaymath}
B_0\Sigma^{-1} C_0=A-\Gamma_0= (zI-\Gamma_0) -(zI-A)^{-1}.
\end{displaymath}
Consequently, the Schur complement \eqref{eq:Schur} is given by
\begin{equation}
\label{eq:Schur1}
S(z)=C_1(zI-\Gamma_0)^{-1}B_1.
\end{equation}
To see that $S(z)$ is full rank, first note thet
\begin{align*}
(zI-\Gamma_0)^{-1}= & z^{-1}(zI-\Gamma_0+\Gamma_0)(zI-\Gamma_0)^{-1} \\
 = & z^{-1}I +z^{-1}\Gamma_0(zI-\Gamma_0)^{-1}.
\end{align*}
to obtain
\begin{equation}
\label{eq:Schur2}
S(z)=z^{-1}\left[C_1B_1+C_1\Gamma_0(zI-\Gamma_0)^{-1}B_1\right],
\end{equation}
which is clearly full rank whenever $C_1B_1$ is nonsingular.

\subsection{Proof of Theorem~\ref{thm:F(z)}}\label{appendixthm4}
Since $u$ is full rank (Theorem~\ref{thm:ufullrank}), $y=F(z)u$ is given by
\begin{displaymath}
y=\begin{bmatrix}W_{20}&W_{21}\end{bmatrix}\begin{bmatrix}
W_{00}&W_{01}\\W_{10}&W_{11}\end{bmatrix}^{-1}
\begin{bmatrix}u_0\\u_1\end{bmatrix},
\end{displaymath}
and therefore
\begin{align*}
F_0W_{00}+F_1W_{10}&=W_{20}\\
F_0W_{01}+F_1W_{11}&=W_{21},
\end{align*}
from which we have
\begin{subequations}
\begin{align}
   F_1(z) &=T(z)S(z)^{-1} \label{eq:ST2F1}  \\
  F_0(z) & =  W_{20}(z)W_{00}(z)^{-1} -F_1(z)W_{10}(z)W_{00}(z)^{-1}\label{eq:F1prel}
\end{align}
\end{subequations}
where $S(z)$ is given by \eqref{eq:Schur1} or \eqref{eq:Schur2} and
\begin{equation}
\label{eq:T}
T(z)=W_{21}(z)-W_{20}W_{00}(z)^{-1}W_{01}(z),
\end{equation}
which clearly is obtained by exchanging $C_1$ by $C_2$ in the calculation leading to \eqref{eq:Schur1}. Consequently,
\begin{equation}
\label{eq:T(z)}
T(z)=C_2(zI-\Gamma_0)^{-1}B_1.
\end{equation}
To determine $F_1(z)$ we apply Lemma~\ref{inverse} in the appendix to obtain
 \begin{align*}
  z^{-1}&B_1S(z)^{-1}     \\
    & =  \left[I- B_1(C_1B_1)^{-1}C_1\Gamma_0(zI-\Gamma_1)^{-1}\right]B_1(C_1B_1)^{-1},
\end{align*}
 where $\Gamma_1$ is given by \eqref{eq:Gamma1}. However,
 \begin{equation}
 \label{eq:Gammao-Gamma1}
 \begin{split}
B_1(C_1B_1)^{-1}C_1\Gamma_0&\\ = \Gamma_0-\Gamma_1&=(zI-\Gamma_1)-(zI-\Gamma_0)
\end{split}
\end{equation}
and therefore
\begin{displaymath}
 z^{-1}B_1S(z)^{-1}=(zI-\Gamma_0)(zI-\Gamma_1)^{-1}B_1(C_1B_1)^{-1},
\end{displaymath}
which together with \eqref{eq:ST2F1} and \eqref{eq:T(z)} yields \eqref{eq:F1}. To derive \eqref{eq:F1alt} just insert
\begin{equation}
\label{eq:decomp}
z(zI-\Gamma_1)^{-1}= \Gamma_1(zI-\Gamma_1)^{-1} +I
\end{equation}
into \eqref{eq:F1}.

To determine $F_0$ from \eqref{eq:F1prel} we first note that a calculation analogous to that leading to \eqref{eq:W10W00inv} yields
\begin{equation}
\label{eq:W20W00inv}
W_{20}(z)W_{00}(z)^{-1}=C_2(zI-\Gamma_0)^{-1}B_0\Sigma^{-1}.
\end{equation}
Moreover, from \eqref{eq:F1} and \eqref{eq:W10W00inv} we obtain
\begin{equation}
\label{eq:F1W10W00inv}
F_1(z)W_{10}(z)W_{00}(z)^{-1}=C_2(zI-\Gamma_1)^{-1}R(z)B_0\Sigma^{-1},
\end{equation}
where
\begin{displaymath}
R(z)=B_1(C_1B_1)^{-1}C_1z(zI-\Gamma_0)^{-1}
\end{displaymath}
However, $z(zI-\Gamma_0)^{-1}= \Gamma_0(zI-\Gamma_0)^{-1} +I$, as in \eqref{eq:decomp},
so
 \begin{displaymath}
R(z)=B_1(C_1B_1)^{-1}C_1\Gamma_0(zI-\Gamma_0)^{-1} +B_1(C_1B_1)^{-1}C_1,
\end{displaymath}
which together with \eqref{eq:Gammao-Gamma1} yields
\begin{displaymath}
R(z)=(zI-\Gamma_1)(zI-\Gamma_0)^{-1}-I+B_1(C_1B_1)^{-1}C_1.
\end{displaymath}
Inserting this expression for $R(z)$ into \eqref{eq:F1W10W00inv} we have
\begin{equation}
\label{eq:F0secondpart}
\begin{split}
&F_1(z)W_{10}(z)W_{00}(z)^{-1}=C_2(zI-\Gamma_0)^{-1}B_0\Sigma^{-1} \\
&\quad- C_2(zI-\Gamma_1)^{-1}\left[I-B_1(C_1B_1)^{-1}C_1\right]B_0\Sigma^{-1},  \\
\end{split}
\end{equation}
Then \eqref{eq:F0} follows directly from \eqref{eq:F1prel}, \eqref{eq:W20W00inv} and \eqref{eq:F0secondpart}.

\begin{IEEEbiography}[{\includegraphics[width=1in,height=1.25in,clip,keepaspectratio]{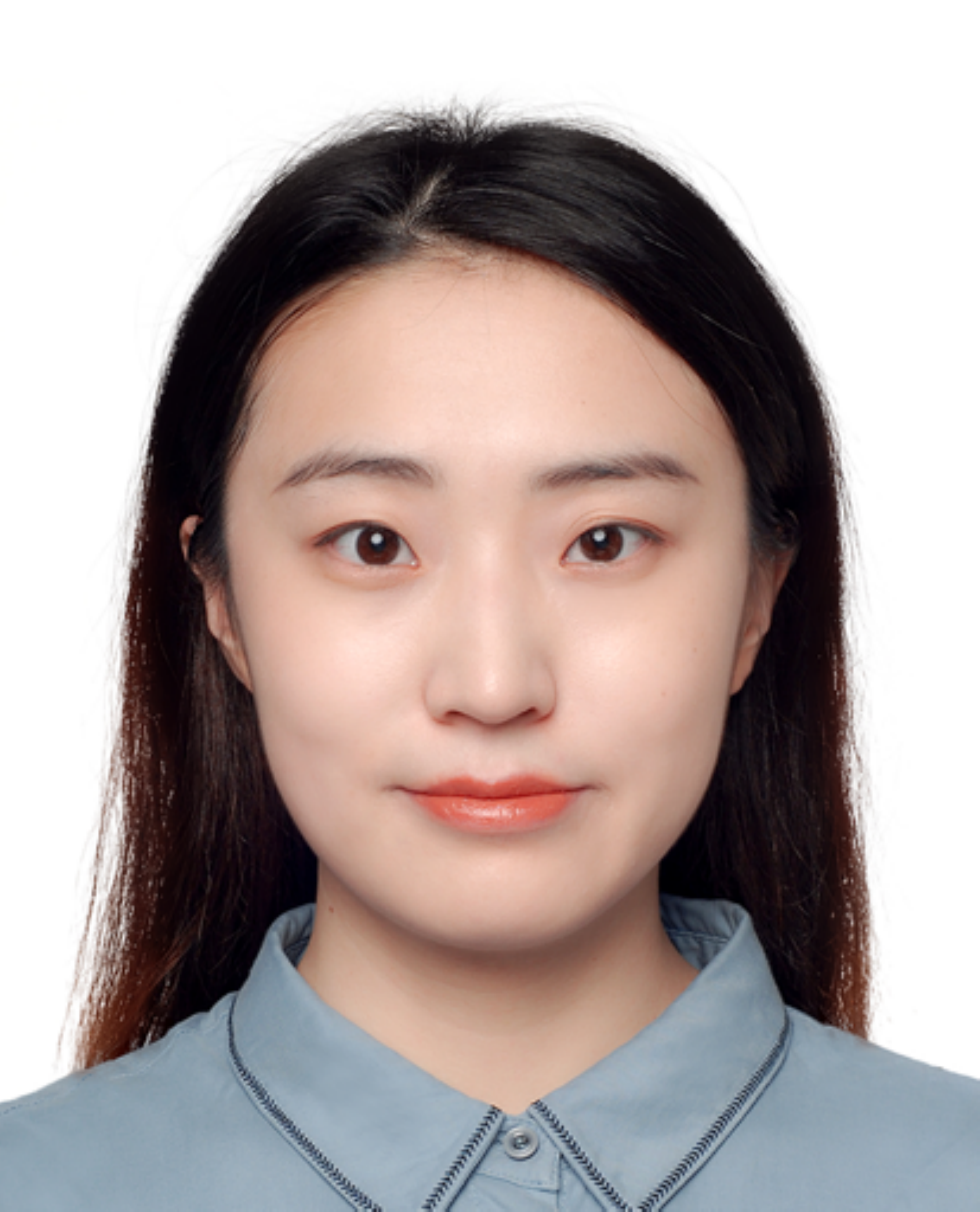}}]{Wenqi Cao} (S'20) received the B.E. degree in intelligent science and technology from Nankai University, Tianjin, China, in 2018. She is now a Ph.D student in Shanghai Jiao Tong University and a visiting Ph.D student in University of Padova, Padova, Italy. Her current research insterest is control and identification of low rank linear stochastic systems.
Wenqi Cao is a reviewer of  IFAC Symposium on System Identification (SYSID) 2021, and of Chinese Automation Congress (CAC) 2021. She was the winner of  Frontrunner 5000 Top Articles in Outstanding S\&T Journals of China in 2019.
\end{IEEEbiography}

\begin{IEEEbiography}[{\includegraphics[width=1in,height=1.25in,clip,keepaspectratio]{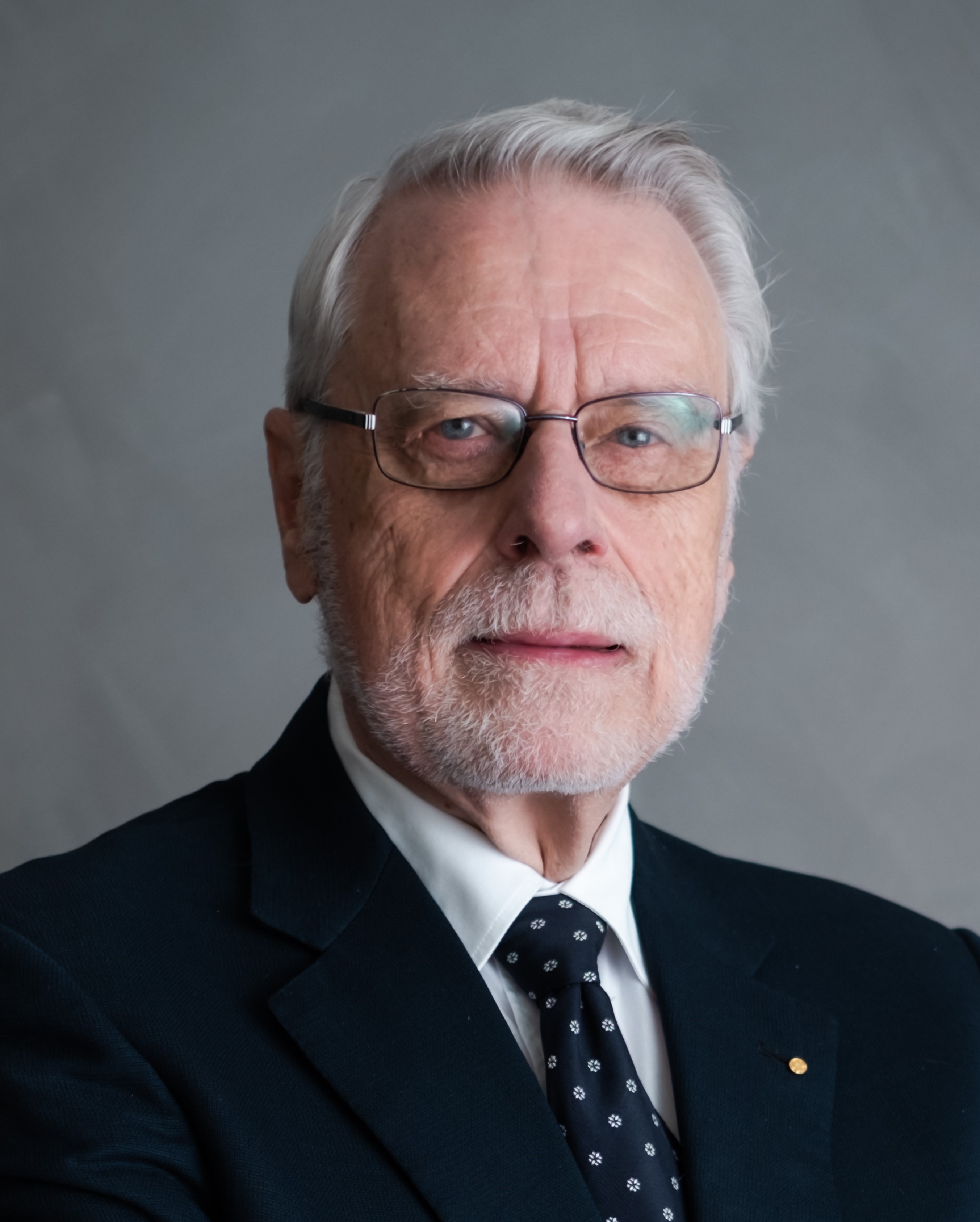}}]{Anders Lindquist}
(M’77--SM’86--F’89--LF’10) received the Ph.D. degree in optimization and systems theory from the Royal Institute of Technology, Stockholm, Sweden, in 1972, and an honorary doctorate (Doctor Scientiarum Honoris Causa) from Technion (Israel Institute of Technology) in 2010.

He is currently a Zhiyuan Chair Professor at Shanghai Jiao Tong University,  China, and Professor Emeritus at the Royal Institute of Technology (KTH), Stockholm, Sweden.  Before that he had a full academic career in the United States, after which he was appointed to the Chair of Optimization and Systems at KTH.

Dr. Lindquist is a Member of the Royal Swedish Academy of Engineering Sciences, a Foreign Member of the Chinese Academy of Sciences, a Foreign Member of the Russian Academy of Natural Sciences (elected 1997), a Member of Academia Europaea (Academy of Europe), an Honorary Member the Hungarian Operations Research Society, a Fellow of SIAM, and a Fellow of IFAC. He received the 2003 George S. Axelby Outstanding Paper Award, the 2009 Reid Prize in Mathematics from SIAM, and the 2020 IEEE Control Systems Award, the IEEE field award in Systems and Control.
\end{IEEEbiography}

\begin{IEEEbiography}[{\includegraphics[width=1in,height=1.25in,clip,keepaspectratio]{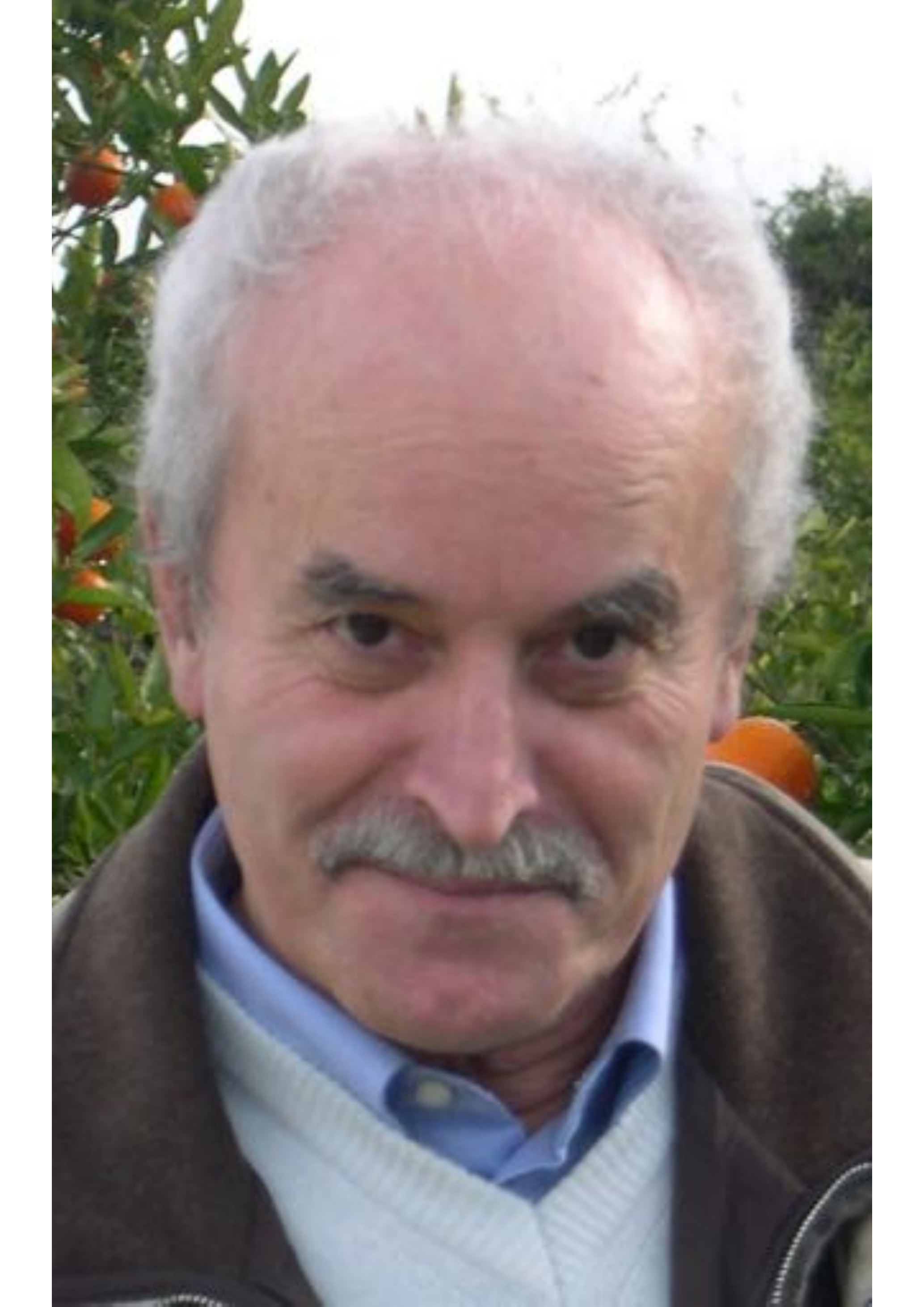}}] {Giorgio Picci}
(S’67--M’70--SM’91--F’94--LF’08) received the Dr.Eng. degree from the University of Padua, Padua, Italy, in 1967.
Currently, he is Professor Emeritus with the Department of Information Engineering, University of Padua, Padua, Italy. He has held several long-term visiting appointments with various American, Japanese, and European universities among which Brown University, MIT, the University of Kentucky, Arizona State University, the Center for Mathematics and Computer Sciences (CWI) in Amsterdam, the Royal Institute of Technology, Stockholm, Sweden, Kyoto University, and Washington University, St. Louis, Mo, USA. He has been contributing to systems and control mostly in the area of modeling, estimation, and identification of stochastic systems and published over 150 papers and written or edited several books in this area. He has been involved in various joint research projects with industry and state agencies. Besides being a life Fellow of IEEE, he
is a Fellow of IFAC and a foreign member of the Swedish Royal Academy of Engineering Sciences.
\end{IEEEbiography}

\end{document}